\documentclass[aps,onecolumn,superscriptaddress]{revtex4}
\usepackage{amsmath}
\usepackage[utf8]{inputenc}
\usepackage{pdfpages}
\usepackage{graphicx}
\usepackage{physics}
\usepackage{amsmath}
\usepackage{epstopdf}
\usepackage{amssymb}
\usepackage{mathtools}
\usepackage{amsthm}
\newtheorem{theorem}{Theorem}
\usepackage[toc,page]{appendix}
\usepackage{float}
\usepackage[colorlinks]{hyperref}
\AtBeginDocument{%
  \hypersetup{
    citecolor=blue,
    linkcolor=blue,
    urlcolor=blue}}

\usepackage{natbib}
\usepackage{bibunits}
\defaultbibliographystyle{naturemag}

\usepackage{bm}
\usepackage{mathtools}
\usepackage{dsfont}
\usepackage{braket}
\usepackage{xcolor}
\usepackage{mdframed}
\usepackage{tabularx}

\usepackage[a4paper, top=3cm, bottom=2cm, left=2.5cm, right=2.5cm]{geometry}

\definecolor{lightorange}{rgb}{0.93,0.93,0.93}
\renewcommand{\subsection}[1]{\par\vspace{1.5em}\noindent\textbf{#1.}\ }

\DeclarePairedDelimiter\floor{\lfloor}{\rfloor}
\usepackage[official]{eurosym}

\newcommand{\absnorm}[1]{\left \lVert #1 \right\rVert_1}

\newcommand{\epsVS}{\epsilon_{{}_{VS}}}
\newcommand{\epsBetaPM}{\epsilon_{{}_{\beta PM}}}
\newcommand{\piVS}{\pi_{{}_{VS}}}

\usepackage[acronym]{glossaries}

\newacronym{ISG}{ISG}{Industry Specification Group}
\newacronym{POWF}{POWF}{Physical One-Way Function}
\newacronym{PUF}{PUF}{Physical Unclonable Funcion}
\newacronym{ISO}{ISO}{International Organization for Standardization}
\newacronym{IEC}{IEC}{International Electrotechnical Commission}
\newacronym{PNS}{PNS}{Photon-Number-Splitting}
\newacronym{BB}{BB84}{Brassard-Bennet-1984}
\newacronym{GG}{GG02}{Grosshans-Grangier-2002}
\newacronym{PLOB}{PLOB}{Pirandola-Laurenza-Ottaviani-Bianchi}
\newacronym{PQC}{PQC}{Post-Quantum Cryptography}
\newacronym{RNG}{RNG}{Random Number Generator}
\newacronym{QC}{QC}{Quantum Cryptography}
\newacronym{QKD}{QKD}{Quantum Key Distribution}
\newacronym{BPM}{$\beta$PM}{$\beta$-Partial Matching}
\newacronym{VS}{VS}{Vector in a Subspace}
\newacronym{IT}{IT}{Information Technology}
\newacronym{TOE}{TOE}{Target Of Evaluation}
\newacronym{AP}{AP}{Attack Potential}
\newacronym{CPTP}{CPTP}{Complete Positive Trace Preserving}
\newacronym{CP}{CP}{Complete Positive}
\newacronym{CVQKD}{CV-QKD}{Continuous-Variable Quantum Key Distribution}
\newacronym{DVQKD}{DV-QKD}{Discrete-Variable Quantum Key Distribution}
\newacronym{CAC}{CAC}{Classical Authenticated Channel}
\newacronym{QDL}{QDL}{Quantum Data Locking}
\newacronym{NSM}{NSM}{Noisy-Storage Model}
\newacronym{QCT}{QCT}{Quantum Computational Time-lock}
\newacronym{EPR}{EPR}{Einstein-Podolsky-Rosen}
\newacronym{AES}{AES}{Advanced Encryption Standard}
\newacronym{RSA}{RSA}{Rivest-Shamir-Adleman}
\newacronym{BQSM}{BQSM}{Bounded-Quantum-Storage Model}
\newacronym{HMQCT}{HM-QCT}{Hidden Matching Quantum Computational Time-lock}
\newacronym{QBER}{QBER}{Quantum Bit Error Rate}
\newacronym{SKC}{SKC}{Secret Key Capacity}
\newacronym{SNSPDs}{SNSPDs}{Superconducting Nanowire Single-Photon Detectors}
\newacronym{DIQKD}{DI-QKD}{Device-Independent Quantum Key Distribution}
\newacronym{DDQKD}{DD-QKD}{Device-Dependent Quantum Key Distribution}
\newacronym{MDIQKD}{MDI-QKD}{Measurement-Device-Independent Quantum Key Distribution}
\newacronym{TFQKD}{TF-QKD}{Twin-Field Quantum Key Distribution}
\newacronym{POVM}{POVM}{Positive Operator-Valued Measure}
\newacronym{NA}{NA}{Numerical Aperture}
\newacronym{MMF}{MMF}{multimode fiber}
\newacronym{TM}{TM}{Transmission Matrix}
\newacronym{SLM}{SLM}{Spatial Light Modulator}
\newacronym{MEMS}{MEMS}{Micro Electro-Mechanical System}
\newacronym{DMD}{DMD}{Digital Micromirror Device}
\newacronym{EMCCD}{EMCCD}{Electron-multiplying  Charge-Coupled Device}
\newacronym{VBS}{VBS}{Variable Beam Splitter}
\newacronym{BS}{BS}{Beam Splitter}
\newacronym{PBS}{PBS}{Polarization Beam Splitter}
\newacronym{EB}{EB}{Entangled-Based}
\newacronym{GMCS}{GMCS}{Gaussian-Modulated Coherent State}
\newacronym{RMT}{RMT}{Random Matrix Theory}
\newacronym{QSA}{QSA}{Quantum-Secure Authentication}
\newacronym{DMPK}{DMPK}{Dorokhov-Mello-Pereira-Kumar}
\newacronym{LO}{LO}{Local Oscillator}
\newacronym{PLS}{PLS}{Physical Layer Security}
\newacronym{SVD}{SVD}{Singular Value Decomposition}
\newacronym{CV}{CV}{Continuous Variable}
\newacronym{PM}{PM}{Prepare and Measure}

\newtheorem{corollary}{Corollary}
\newtheorem{lemma}{Lemma}

\usepackage[T1]{fontenc}
\usepackage[utf8]{inputenc}
\usepackage{lmodern}

\makeatletter
\renewcommand{\subsubsection}[1]{%
  \par\vspace{0.8em}%
  \noindent\textit{#1.}\quad
}
\makeatother

\begin{document}

\begin{bibunit}[naturemag]
\title{Reconfigurable Optical Platform for One-way Quantum Communication Complexity}

\author{Francesco Mazzoncini} 
\affiliation{T\'el\'ecom Paris-LTCI, Institut Polytechnique de Paris, 19 Place Marguerite Perey, 91120 Palaiseau, France}
\affiliation{Orange Innovation, Orange Gardens, 44 Avenue de la République, 92326 Châtillon, France}

\author{Hugo Defienne}
\affiliation{Sorbonne Université, CNRS, Institut des NanoSciences de Paris, INSP, Paris, France}

\author{Romain Alléaume}
\affiliation{T\'el\'ecom Paris-LTCI, Institut Polytechnique de Paris, 19 Place Marguerite Perey, 91120 Palaiseau, France}
\author{Sylvain Gigan}
\affiliation{Laboratoire Kastler Brossel, ENS-Université PSL, CNRS, Sorbonne Université, Collège de France, 24 rue Lhomond, Paris, France}
\date{\today}

\begin{abstract}

Demonstrating a practical quantum advantage remains a central goal in quantum information science. While quantum computational supremacy is still technologically demanding, communication complexity offers a promising route to showcase quantum advantage with current photonic platforms. Here we introduce a reconfigurable optical platform for one-way quantum communication complexity based on multimode fibers and wavefront shaping. We experimentally validate it by implementing a genuine one-way quantum communication complexity problem for which an exponential quantum--classical communication separation is known. Complementary  numerical simulations show that the same reconfigurable decoding architecture can support more general one-way communication tasks with comparable performance, while also offering a route to higher-dimensional implementations without increasing hardware complexity.
Together, these results establish multimode-fiber wavefront shaping as a versatile hardware platform for one-way quantum communication complexity and provide a concrete roadmap toward more demanding protocols, where stronger quantum--classical separations could enable practical demonstrations of quantum advantage.

\end{abstract}

\maketitle
\section{Introduction}

A central objective in quantum information science is to identify and experimentally demonstrate tasks where quantum resources provide a clear advantage over their classical counterparts, whether in terms of computational power \cite{harrowQuantumComputationalSupremacy2017, gillQuantumComputingTaxonomy2022}, security \cite{gisinQuantumCryptography2002, pirandolaAdvancesQuantumCryptography2020}, or communication efficiency \cite{dewolfQuantumCommunicationComplexity2002, buhrmanNonlocalityCommunicationComplexity2010}. While the realization of quantum advantage in general-purpose quantum computation remains a formidable challenge, the field of communication complexity offers a promising avenue for showcasing quantum advantage with currently available photonic technologies.

Communication complexity \cite{yaoProbabilisticComputationsUnified1977} studies the minimum amount of information that must be exchanged between parties to jointly solve a distributed computational task. Over the years, communication complexity has been extensively investigated across multiple layout models, which specify the allowed interaction between the players.
In the two-way (interactive) communication model, players can exchange messages interactively until one of them produces the answer. In the one-way communication model, only Alice sends a message to Bob, who must produce the answer based on the message he received and his input. In the simultaneous message passing (SMP) model, both Alice and Bob send one message to a third participant (the referee), who then produces an answer based on these two messages.  Quantum protocols  have been shown to offer  exponential advantage in communication cost across the standard layouts, with the comparison with classical protocols made within the same model in most cases. For example, in the interactive model we have Raz’s  pioneering two-way communication problem \cite{razExponentialSeparationQuantum1999}, in the one-way model the Hidden Matching problem \cite{bar-yossefExponentialSeparationQuantum2004}, and in the SMP model  the Quantum Fingerprinting problem \cite{buhrmanQuantumFingerprinting2001}. Beyond these, there is a task, known as \gls{VS} \cite{regevQuantumOnewayCommunication2011}, where a one-way quantum protocol still achieves an exponential advantage even against the stronger classical two-way interactive model.   However, implementing these protocols in practice is often hindered by the need for complex, large-scale interferometric networks or highly entangled multi-qubit states, which are difficult to realize with current experimental platforms.

Recent theoretical advances have proposed practical mappings of quantum communication protocols onto optical systems using coherent states, linear optics, and single-photon detection \cite{arrazolaQuantumCommunicationCoherent2014, arrazolaQuantumFingerprintingCoherent2014}. Proof-of-principle demonstrations followed, but they relied on  detection setups tailored to narrow families of transformations rather than fully reconfigurable, high-dimensional linear networks. In particular, in SMP quantum fingerprinting the referee uses a fixed phase interferometer where Alice's and Bob's  optical pulses interfere through a balanced beam splitter \cite{xuExperimentalQuantumFingerprinting2015a, zhongEfficientExperimentalQuantum2021};  while in the one-way Hidden Matching experiment, Bob's cascaded delay-line interferometers implement a fixed  set of transformations over the time-bin register \cite{kumarExperimentalDemonstrationQuantum2019}. More precisely, Ref.~\cite{kumarExperimentalDemonstrationQuantum2019} discusses the general quantum protocol for Hidden Matching, but experimentally implements a related variant, known as Sampling Matching, which is not strictly a communication complexity task since Bob does not receive an input. However, when aiming for an experimental advantage of a one-way quantum protocol even against the stronger classical two-way model, a far more flexible  detection platform is needed---capable of implementing programmable random linear transformations on high-dimensional quantum states \cite{regevQuantumOnewayCommunication2011}.

In this work, we introduce and experimentally demonstrate  such a reconfigurable optical platform for one-way quantum communication complexity problems, leveraging the rich mode-mixing properties of multimode fibers (MMFs) and wavefront shaping techniques \cite{popoffMeasuringTransmissionMatrix2010, rotterLightFieldsComplex2017}. Our approach enables the implementation of high-dimensional, programmable linear optical networks without the need for cascaded interferometers or extensive optical hardware \cite{leedumrongwatthanakunProgrammableLinearQuantum2020,cavaillesHighfidelityLargescaleReconfigurable2022, makowskiLargeReconfigurableQuantum2024}.  By encoding information in the spatial degrees of freedom of light and dynamically controlling the input wavefront, we realize a flexible and low-loss platform capable of addressing a wide range of quantum communication complexity problems.

We focus on two benchmark problems with well-understood quantum–classical gaps across communication models: the already mentioned \gls{VS} problem and the \gls{BPM} problem \cite{gavinskyExponentialSeparationsOneway2007, mazzonciniHybridQuantumCryptography2025a} (a variant of Hidden Matching). The \gls{VS} problem admits a one-way quantum protocol using only $\mathcal{O}(\log n)$ qubits, while any classical two-way protocol requires polynomial communication—provably $\Omega(n^{1/3})$  and conjectured $\Omega(\sqrt{n})$  —thus yielding an exponential separation even against the stronger classical model \cite{regevQuantumOnewayCommunication2011}. By contrast, \gls{BPM} exhibits an exponential separation within the one-way model itself: a quantum $\mathcal{O}(\log n)$ qubits  protocol versus $\Omega(\sqrt{n})$  bits for any classical one-way protocol \cite{gavinskyExponentialSeparationsOneway2007}. However, from an implementation standpoint, \gls{VS} requires full  amplitude-and-phase preparation of Alice's quantum state, whereas \gls{BPM} is phase-only and therefore compatible with our current setup.

Accordingly, we experimentally implement \gls{BPM} on our reconfigurable spatial-mode platform, and we use the best-known one-way classical protocol for \gls{VS} as a stringent benchmark on the transmitted information. To our knowledge, no explicit classical two-way protocol is currently known that would provide a stronger benchmark for this task.
Complementary numerical studies (Methods) show that our programmable decoding stage can realize the \gls{VS} measurement with comparable performance: our current hardware limitation lies in Alice’s amplitude control, not in Bob’s reconfigurability. In particular, our results suggest that increasing the number of supported modes and mitigating technical noise could bring the platform into a regime where a one-way quantum implementation outperforms the best-known classical one-way bounds for \gls{VS}. 

Extending this analysis to the more general two-way classical model---either by tightening existing lower bounds (proving the $\Omega(\sqrt{n})$ conjecture with explicit prefactors) or by developing explicit interactive protocols and simulating their performance---would provide a broader benchmark and refine the experimental targets (mode count, loss budget, noise) for demonstrating an exponential one-way quantum advantage even against classical two-way strategies.

\section{Results}
\subsection{Communication-complexity benchmarks}
\label{sec:QCC}
Communication complexity is a central model of computation, first defined by Yao \cite{yaoProbabilisticComputationsUnified1977}, where there are two players, Alice and Bob, who each receive an input: Alice gets $x$ from a set $\mathcal{X}$ and Bob gets $y$ from a set $\mathcal{Y}$. Their goal is to use the allowed type of communication  (either classical or quantum) to compute with high probability the value of $f(x,y)$, where $f$ is a function (or relation) defining the computational problem that the players have to solve.

\subsubsection{\texorpdfstring{$\beta$}--Partial Matching Problem}
\label{subsec:betapm}
In this section we shall present the quantum communication complexity problem, called \gls{BPM} problem, for which  $\Omega(\sqrt{n})$ bits of classical communication from Alice to Bob are required, with $n$ the length of input $x$.

Suppose Alice has an $ n $-bit string $ x $, while Bob has a sequence $ M $ consisting of $ \beta n $ disjoint pairs $ (i_1, j_1), (i_2, j_2), \ldots, (i_{\beta n}, j_{\beta n}) \in [n]^2 $, for some parameter $ \beta \in (0, 1/2] $. This sequence $ M $ can be interpreted as a partial matching on the graph whose vertices are the $ n $ bits $ x_1, \ldots, x_n $. We refer to this as  $ \beta $-matching.
Together, $ x $ and $ M $ generate a $ \beta n $-bit string $ z $ defined by the parities of the $ \beta n $ edges:

\begin{equation}
    z = z(x, M) = x_{i_1} \oplus x_{j_1}, x_{i_2} \oplus x_{j_2}, \ldots, x_{i_{\beta n}} \oplus x_{j_{\beta n}}.
\end{equation}

 Let $ x \in \{0, 1\}^n $ be uniformly distributed, and let $ M $ be uniformly chosen from the set of all $ \beta $-matchings. It is important to note that for any fixed matching $ M $, a uniform distribution on $ x $ results in a uniform distribution on $ z $. Therefore, Bob, who knows $ M $ but not $ x $, has no information about $ z $: from his viewpoint, it appears uniformly distributed. Now, suppose Bob also has access to a string \( \omega \) which is either equal to \( z \) or \( \bar{z} \), and he wants to determine which of the two cases is true.

 \begin{theorem}[\gls{BPM} best-known classical protocol \cite{mazzonciniHybridQuantumCryptography2025a}]
\label{theo:betapm_communication}
Let $d$ be an integer. An explicit one-way classical protocol exists with communication cost $d$ bits which solves the $n$-dimensional $\beta PM$ protocol with error probability, for any input, at most
\begin{equation}
 \epsBetaPM(d)=  \sum_{k=0}^{d} \frac{\binom{2 \beta n}{k} \binom{n-2 \beta n}{d-k}}{2\binom{n}{d}} e^{-\frac{k(k-1)}{4\beta n}}\;,
 \label{eq:eps_betaPM}
\end{equation}
where we use the convention $\binom{a}{b}=0$ whenever $b>a$.
\end{theorem}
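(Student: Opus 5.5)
The plan is to exhibit an explicit public-coin one-way protocol in which Alice sends $d$ uniformly sampled bits of $x$, and then to bound its error for every fixed input $(x,M,\omega)$. All randomness comes from the shared coins, so the bound will hold for any input and not just on average. The protocol is as follows. Using shared randomness, Alice and Bob pick a uniformly random subset $S\subseteq[n]$ with $|S|=d$, and Alice sends the $d$ bits $(x_i)_{i\in S}$. Bob looks for an edge $(i_\ell,j_\ell)\in M$ with both endpoints in $S$. If he finds one, he computes $z_\ell=x_{i_\ell}\oplus x_{j_\ell}$ and compares it with $\omega_\ell$: equality means $\omega=z$, inequality means $\omega=\bar z$. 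If no edge of $M$ lies inside $S$, he outputs a uniformly random guess.

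\textbf{Error analysis.} When an edge is found Bob is always correct, so the error probability is exactly $\tfrac12\Pr[\text{no edge of } M \text{ inside } S]$. Fix $M$, and let $V_M$ be its set of $2\beta n$ matched vertices. I will condition on $k=|S\cap V_M|$. Since $S$ is a uniform $d$-subset, $k$ is hypergeometric:
\[
\Pr[k]=\binom{2\beta n}{k}\binom{n-2\beta n}{d-k}\Big/\binom{n}{d}.
\]
Conditioned on $k$, the set $S\cap V_M$ is a uniform $k$-subset of $V_M$. This already produces the prefactor $\binom{2\beta n}{k}\binom{n-2\beta n}{d-k}/\bigl(2\binom{n}{d}\bigr)$ in \eqref{eq:eps_betaPM}, with the convention $\binom{a}{b}=0$ taking care of impossible values of $k$. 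It remains to show
\[
\Pr[\text{no edge}\mid k]\le e^{-k(k-1)/(4\beta n)}.
\]

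\textbf{Birthday-type bound (main step).} Reveal the $k$ vertices of $S\cap V_M$ one by one, in the style of the birthday problem. Suppose the first $j$ vertices contain no edge. Then they are $j$ distinct vertices whose $j$ partners are all unchosen. The $(j+1)$-st vertex is uniform among the remaining $2\beta n-j$ vertices, so it avoids completing an edge with probability
\[
1-\frac{j}{2\beta n-j}\le 1-\frac{j}{2\beta n}\le e^{-j/(2\beta n)}.
\]
Multiplying over $j=0,\dots,k-1$ gives $\exp\bigl(-\sum_{j<k} j/(2\beta n)\bigr)=e^{-k(k-1)/(4\beta n)}$. When $k>\beta n+1$ some factor is zero, which is consistent with the bound.

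Summing over $k$ then gives exactly $\epsBetaPM(d)$. The only delicate point is making sure that the uniformity of $S$ over the public coins makes the bound independent of $x$, $M$ and $\omega$. This holds because the success event depends only on $S$ and $M$, and every matching $M$ has the same $2\beta n$ matched-vertex count.
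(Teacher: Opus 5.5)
Your proposal is correct. The public-coin protocol (shared uniform $d$-subset $S$, Alice sends $x_S$, Bob uses any edge of $M$ inside $S$ and guesses otherwise) has error exactly $\tfrac12\Pr[\text{no edge of }M\text{ in }S]$. Conditioning on the hypergeometric overlap $k=|S\cap V_M|$ and then applying the sequential birthday bound with exact factors $1-j/(2\beta n-j)\le e^{-j/(2\beta n)}$ reproduces $\epsBetaPM(d)$ term by term, uniformly over all inputs.

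The paper itself gives no proof of this theorem and defers to the cited reference, so there is no in-paper argument to compare against. Your sampling-plus-birthday argument matches every feature of the formula: the hypergeometric weights, the factor $1/2$ for the random guess, and the birthday exponent over $2\beta n$ matched vertices.

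One small slip, which is harmless: the vanishing factor at $j=\beta n$ already enters the product for $k\ge\beta n+1$, not only for $k>\beta n+1$.
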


On the other hand, using  quantum resources, the above task can be solved with a fixed error probability by transmitting only   $\mathcal{O}(\log n)$ qubits \cite{gavinskyExponentialSeparationsOneway2007}.
Alice prepares a quantum state encoding her bit string as phase information in a uniform superposition 
\begin{equation}
    \ket{\psi_x} = \frac{1}{\sqrt{n}} \sum_{i=1}^n (-1)^{x_i} \ket{i}
    \label{eq:single_phot_psi}
\end{equation}
and sends it to Bob. His measurement can be represented as a linear operator. This operator depends on the input $y = (M, \omega)$, which includes the matching $M$ and the binary vector $\omega$. After applying the operator, the measurement outcome is determined by projecting onto one of two orthogonal subspaces, corresponding to $b=0$ or $b=1$.
We refer to the Supplementary Material for a full description of Bob's linear operator.

\subsubsection{Vector in a Subspace Problem}
The \gls{VS} problem is another communication problem where one party Alice receives a unit vector $x \in \mathbb{R}^n$ and Bob receives a subspace $H\subseteq \mathbb{R}^n$ of dimension $\floor{n/2}$  such that either  $x \in H$ or $x \in H^{\perp}$. Their goal is to answer 0 if $x$ is in $H$ and 1 if $x$ is in $H^{\perp}$. In particular, we will focus on the case where both the unit vector $x$ and the subspace $H$ are picked from a uniform distribution, with the promise that $x$ is either in $H$ or $H^{\perp}$.

In the study by Regev et al.\ \cite{regevQuantumOnewayCommunication2011}, the authors established a lower bound of $\Omega(n^{1/3})$ for the two-way classical communication complexity. However, the lower bound presented in their work is not considered to be definitive. They propose that a lower bound of $\Omega(\sqrt{n})$ is likely to be more accurate. In support of this conjecture, a restricted version of the problem was examined in the study by Grupel et al.\ \cite{grupelSamplingSphereMutually2017}, where they established an optimal lower bound of $\Omega(\sqrt{n})$ bits. 

Finally, a classical protocol was sketched \cite{razExponentialSeparationQuantum1999}, which solves the $n$-dimensional \gls{VS} problem protocol with one-way communication complexity of $\mathcal{O}(\sqrt{n})$. Our original contribution has been to carefully derive the prefactors of this sketched protocol. Similar to the \gls{BPM} problem, we only introduce the key theorem here that establishes an upper boundary for the error probability. The proof of this theorem can be found in the Supplementary Material.
\begin{theorem}[VS best-known classical protocol]
\label{theo:VS_communication}
Let $n\ge 4$ be even and let $d\ge 1$ be an integer. An explicit one-way public-coin protocol $\piVS$ exists with communication cost $CC(\piVS)=d$ which solves the $n$-dimensional \gls{VS} problem with error probability at most
\begin{equation}
\label{eq:epsilon_VS}
\epsVS(d,n)
=
\inf_{\begin{array}{c}
0<t<\sqrt{d/\pi},\,
u>0
\end{array}}
\left[
e^{-t^2/2}
+
2e^{-u^2/2}
+
e^{-x_{t,u}}
\right],
\end{equation}
where $x_{t,u}$ is defined as follows. If the set of nonnegative numbers $x$ such that
\begin{equation}
\label{eq:def_xtu}
\frac{n-1+2\sqrt{(n-1)x}+2x}{n}
\le
\frac{\left(\sqrt{d/\pi}-t\right)^2}{\frac{n}{n-1}+\frac{2un}{\sqrt{n-1}}}
\end{equation}
is nonempty, then $x_{t,u}$ is its largest element. Otherwise, we set $x_{t,u}=0$.
\end{theorem}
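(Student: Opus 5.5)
The plan is to make the sketched protocol of \cite{razExponentialSeparationQuantum1999} explicit as a public-coin scheme based on random Gaussian projections. I would then bound its error with three concentration inequalities joined by a union bound. The three terms of \eqref{eq:epsilon_VS} should correspond to these three inequalities, and the infimum over $t,u$ comes for free because the protocol does not depend on $t,u$.

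\emph{Protocol.} With public randomness, Alice and Bob share i.i.d.\ standard Gaussian vectors $g_1,\dots,g_d\in\mathbb{R}^n$. Alice sends the $d$ bits $s_i=\mathrm{sign}\langle g_i,x\rangle$, so $CC(\piVS)=d$. Bob forms
\[
w=\frac{1}{\sqrt d}\sum_{i=1}^d s_i g_i .
\]
He answers $0$ if $\|P_H w\|^2\ge \|P_{H^\perp}w\|^2$, and $1$ otherwise. By rotational invariance and the symmetry between $H$ and $H^\perp$ (both have dimension $n/2$ since $n$ is even), it suffices to bound the error for a fixed $x\in H$.

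\emph{Decomposition of $w$.} Write $w=a\,x+w_\perp$ with $a=\frac{1}{\sqrt d}\sum_i|\langle g_i,x\rangle|$. The components of the $g_i$ orthogonal to $x$ are independent of the signs $s_i$. Hence $w_\perp$ is a standard Gaussian vector in $x^\perp\cong\mathbb{R}^{n-1}$, independent of $a$.

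\emph{The three error sources.}
\begin{itemize}
\item \textbf{Signal term.} The map $(g_i\cdot x)_i\mapsto a$ is $1$-Lipschitz, and $\mathbb{E}a=\sqrt{2d/\pi}$, which is at least the $\sqrt{d/\pi}$ appearing in the statement; the precise normalisation may need adjusting. Gaussian concentration then gives $a\ge\sqrt{d/\pi}-t$ except with probability $e^{-t^2/2}$.
\item \textbf{Noise magnitude.} The Laurent--Massart bound gives $\|w_\perp\|^2\le n-1+2\sqrt{(n-1)x}+2x$ except with probability $e^{-x}$. This matches the left-hand side of \eqref{eq:def_xtu}, with the division by $n$ coming from normalising $\|w_\perp\|^2/n$.
\item \textbf{Noise split between $H$ and $H^\perp$.} The difference $\|P_{H^\perp}w_\perp\|^2-\|P_{H\cap x^\perp}w_\perp\|^2$ must be controlled relative to $\|w_\perp\|^2$. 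Conditioned on its norm, $w_\perp$ is uniform on a sphere, so this difference is a centred, roughly Gaussian quantity with scale about $\|w_\perp\|^2/\sqrt{n-1}$. A two-sided Gaussian/spherical tail gives the $2e^{-u^2/2}$ term, and it should produce a denominator of the form $\frac{n}{n-1}+\frac{2un}{\sqrt{n-1}}$. Here the extra $\frac{n}{n-1}$ accounts for the one missing dimension $x$ on the $H$ side.
\end{itemize}

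\emph{Combining.} On the intersection of the three good events, a wrong answer requires $a^2<$ (difference of the two noise projections). This is ruled out whenever \eqref{eq:def_xtu} holds. Choosing $x=x_{t,u}$ maximal and taking a union bound gives the bracket in \eqref{eq:epsilon_VS}. If no $x\ge 0$ satisfies \eqref{eq:def_xtu}, then $x_{t,u}=0$ and the bound is trivially at least $1$. The condition $t<\sqrt{d/\pi}$ keeps $\sqrt{d/\pi}-t$ positive. Finally, take the infimum over admissible $t,u$.

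\emph{Main obstacle.} I expect the third step to be the hardest: obtaining a clean, non-asymptotic, two-sided tail for the difference of projections of a spherical vector onto the complementary subspaces $H\cap x^\perp$ and $H^\perp$, of dimensions $n/2-1$ and $n/2$, in exactly the multiplicative form $\frac{n}{n-1}+\frac{2un}{\sqrt{n-1}}$. My first attempt would use the representation $P_{H^\perp}w_\perp$ versus $P_{H\cap x^\perp}w_\perp$ via a Beta-distributed coordinate fraction, together with a sub-Gaussian bound for Beta variables. Failing that, I would write the difference as a Gaussian quadratic form and apply Hanson--Wright style bounds, then tune constants to match the statement.
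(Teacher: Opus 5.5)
Your argument is correct. From the decomposition $w=a\,x+w_\perp$ onward it follows the paper's proof; the genuine difference is the protocol itself.

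\emph{The protocol.} The paper instantiates $\piVS$ as Raz's sketch. It uses $2^d$ shared vectors $z^j\sim\mathcal N(0,\frac1n I_n)$; Alice sends the index $\hat j$ maximising $\langle z^j,x\rangle$, and Bob compares the two projections of $z^{\hat j}$. Its signal is the maximum $y=\max_j\langle z^j,x\rangle$. This is controlled by Borell--TIS together with the lower bound $\mathbb{E}[y]\ge\sigma\sqrt{\ln m}/\sqrt{\pi\ln 2}$ at $m=2^d$, which is exactly where $\sqrt{d/\pi}$ comes from. Your independence argument ($s_i$ depends only on $\langle g_i,x\rangle$) has a direct analogue there: $\hat j$ depends only on the projections $\langle z^j,x\rangle$, so the selected orthogonal part stays $\mathcal N(0,\frac1n I_{x^\perp})$.

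\emph{What your route buys.} Your signal $a$ is a $1$-Lipschitz function of $d$ standard Gaussians with exact mean $\sqrt{2d/\pi}$. So Gaussian concentration plus an exact mean replace Borell--TIS plus a bound on an expected maximum. No normalisation needs adjusting: with unit-variance noise, $a$ plays the role of $\sqrt n\,y$. You even have a factor $\sqrt2$ to spare, i.e.\ your analysis proves the bound with $\sqrt{2d/\pi}$ in place of $\sqrt{d/\pi}$. You also need only $d$ shared vectors instead of $2^d$, and $O(nd)$ rather than $O(n2^d)$ work for Alice, which matters at $d\sim\sqrt n$.

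\emph{What the paper's route buys.} It stays faithful to the protocol it benchmarks as best known. That protocol's true signal, asymptotically $\sqrt{2d\ln 2}$, exceeds your $\sqrt{2d/\pi}$, even though the proven bound does not show it. Since the statement is existential, your protocol suffices. Note, however, that the Supplementary reserves the name $\piVS$ for the max-index protocol, which your argument does not cover.

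\emph{The step you flagged as the main obstacle.} The paper handles it along the line you begin. Write $w_\perp=R\nu$ with $\nu$ uniform on the unit sphere of $x^\perp$ and independent of $R$, and set $q=\|\mathrm{Proj}_{H^\perp}\nu\|^2$. The noise difference is then $R^2(2q-1)$. It is not centred: its conditional mean $R^2/(n-1)$ is the source of the $\frac{n}{n-1}$. All that is needed is the one-sided bound $P\bigl(q\ge\frac{n}{2(n-1)}+\frac{u}{\sqrt{n-1}}\bigr)\le 2e^{-u^2/2}$, which gives $n(2q-1)\le\frac{n}{n-1}+\frac{2un}{\sqrt{n-1}}$. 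The paper quotes this as L\'evy's lemma with $N=n-1$ and $l=n/2$.

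Your Beta idea proves it directly. Here $q\sim\mathrm{Beta}(\frac n4,\frac{n-2}4)$, and $\mathrm{Beta}(\alpha,\beta)$ is sub-Gaussian with variance proxy at most $1/(4(\alpha+\beta+1))$. This gives a tail of $e^{-u^2(n+1)/(n-1)}\le e^{-u^2}$. A Hanson--Wright bound would only give unspecified absolute constants, so drop that fallback.
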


On the other hand, a one-way quantum protocol is known that only requires $\mathcal{O}(\log n)$ qubits \cite{regevQuantumOnewayCommunication2011}  to solve the VS problem with a fixed error probability.
The protocol involves Alice creating a quantum state $\ket{\psi_x}= \sum_{i=1}^{n} x_i \ket{i}$ 
and transmitting it to Bob, who then executes a projective measurement onto the subspaces $H$ and $H^{\perp}$. In this protocol, Alice's encoding is based on the relative amplitudes, contrasting with the \gls{BPM} quantum protocol where the information is encoded in the relative phases.

\subsubsection{Comparison between  the two problems}
As we conclude this section, one important question remains open: which of the two protocols is harder to implement in the classical model? 
For a fixed error probability threshold (e.g., 0.2), a meaningful comparison can be drawn between the transmitted bits for the best classical protocol with the number of transmitted qubits in a one-way quantum protocol. 
In particular, we allow Alice to send $m$ identical copies of the same $n$-dimensional quantum state, thereby reducing the error probability.
Overall, given $m$ copies sent, the amount of qubits transmitted is simply upper bounded by $m\lceil\log(n)\rceil$.

 \begin{figure}[htb!]
\centering
\vspace{1pt}
\includegraphics[width=0.7\textwidth]{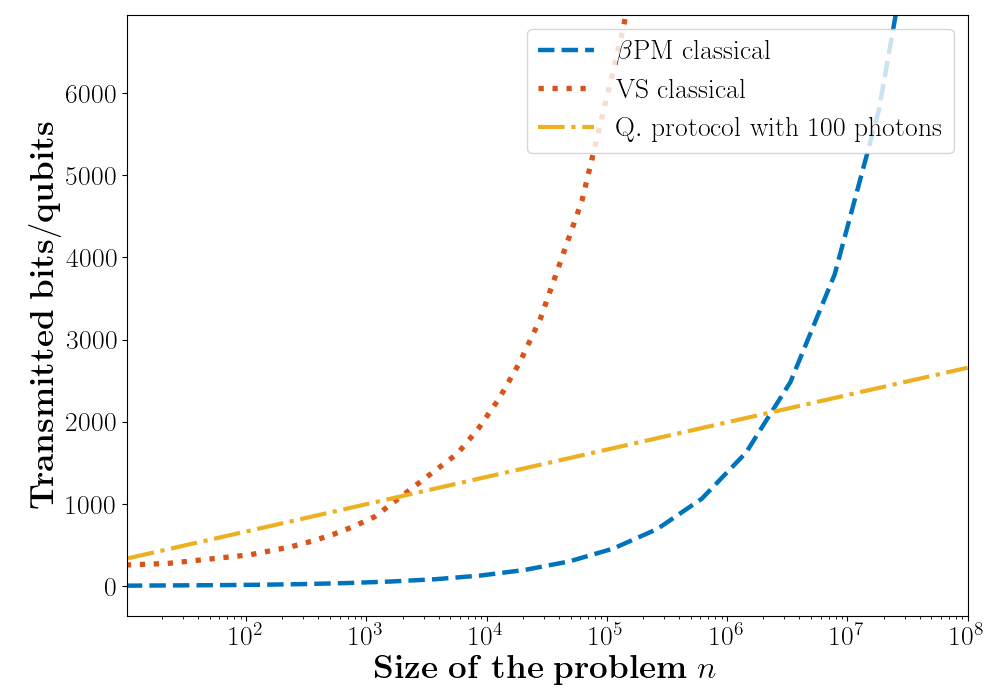}
\caption{Comparison between the required number of  transmitted bits/qubits for classical and quantum upper bounds for an error probability of $\epsilon = 0.2$. }
\label{fig:comparison_VS_HM}
\end{figure}

As observed in Figure \ref{fig:comparison_VS_HM},  at a fixed size of the problem $n$, the  \gls{VS} classical protocol requires to  send many more bits compared to the \gls{BPM} problem.  The plot also illustrates how one could implement the quantum protocol for the \gls{VS} problem  with 100 copies of the same quantum state and still outperform its classical upper bound even for relatively modest values of $n$ (e.g., a few thousand). This redundancy in quantum state transmission not only enhances the protocol's practicality but also provides crucial compensation for potential transmission losses while increasing resilience to environmental noise.

\subsection{Reconfigurable spatial-mode platform}
As illustrated in Figure \ref{fig:setup_true}, our platform to test one-way communication complexity problems is conceptually divided into two parts. 
 
 \begin{figure}[htb!]
         \centering
         \includegraphics[width=0.9\textwidth]{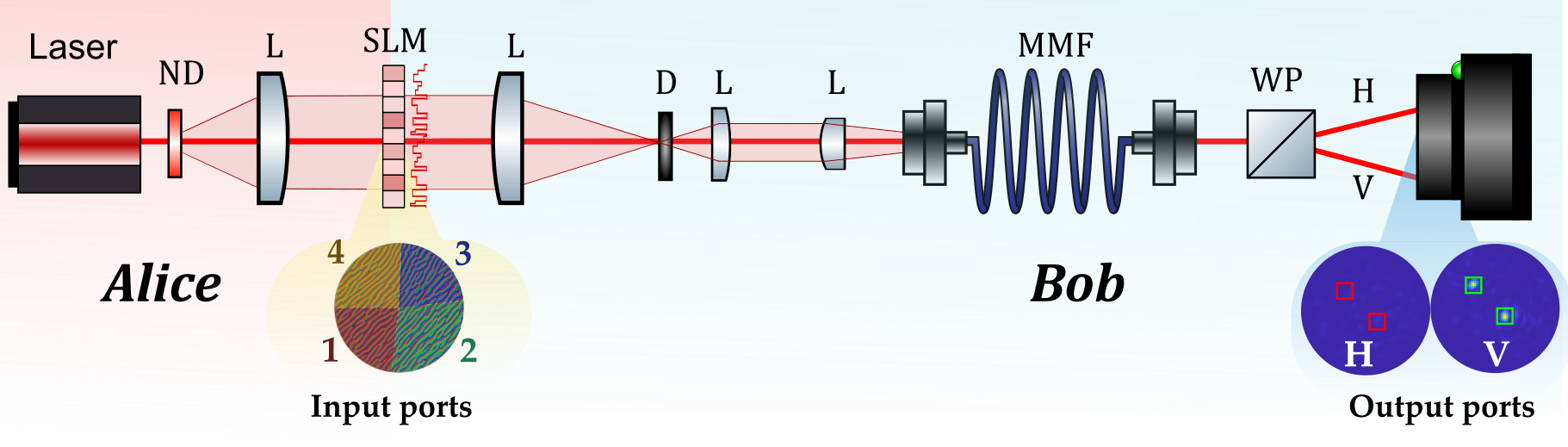}
\caption[Setup configuration to solve a one-way boolean quantum communication complexity problem.]{Setup configuration to solve a one-way boolean quantum communication complexity problem. Alice's encoding is simulated by adding an additional mask to the \gls{SLM}, which has been carefully partitioned in $n$ ports ($n=4 $ in this illustration). Bob uses the same \gls{SLM} and  \gls{MMF} to build a linear operator. He then divides the output in the camera into two subspaces (light H-polarized and V-polarized), where for each subspace he defines $k/2$ macro pixels as output ports ($k=4$ in this illustration). 
If Bob detects a greater amount of light in the red detection modes, he responds with $b=0$; if not, he responds with $b=1$.(L: lens, ND: neutral-density filter, D: Iris diaphragm, WP: Wollaston prism.)}
\label{fig:setup_true}
 \end{figure}

\begin{itemize}
\item \textbf{Alice's setup:}
Alice employs a superluminescent diode at $810 \pm 20$ nm to generate coherent states with a controllable mean photon number. A 1-nm bandpass filter is added on the source path so that spectral dispersion is negligible when light propagates through the \gls{MMF}. This is justified by the use of a short graded-index fiber, for which the spectral bandwidth $\delta\omega$ associated with output-speckle decorrelation is much larger than the spectral width of the filtered source. Alice also has shared access to a reflective liquid-crystal phase-only \gls{SLM} (Hamamatsu, X10468-02), which is partitioned into $n$ input ports, as described in the Supplementary Material.

\item \textbf{Bob's setup:}
Bob's apparatus includes the shared \gls{SLM} and a graded-index \gls{MMF} supporting approximately 400 propagation modes at 810 nm (Thorlabs, GIF50C, length $55.3 \pm 0.1$ cm, core diameter $50 \pm 2.5$ $\mu$m, \gls{NA} $0.200 \pm 0.015$). Since the \gls{SLM} controls only the H-polarized input field, we effectively address approximately 200 modes. The \gls{MMF} mixes spatial and polarization modes, and a Wollaston prism is used to separate the two orthogonal polarizations, H and V, onto distinct regions of the \gls{EMCCD} camera (Andor iXon3 860). This binary division of the detection plane is specifically designed for boolean one-way communication complexity problems.

\end{itemize}

It's worth noting that in a practical implementation of a one-way quantum communication protocol, two \gls{SLM}s would be necessary: one for Alice to encode her quantum states and another for Bob to perform the correct measurements. However, for the sake of simplicity and to minimize costs, our setup employed just a single \gls{SLM}, where Alice’s input state is encoded by adding additional phase masks to each port of the SLM, drawing inspiration from a strategy implemented in \cite{matthesOpticalComplexMedia2019}.

 \subsubsection{Alice's encoding}
 First, as previously mentioned, the components of the input quantum states are encoded directly on the \gls{SLM}.
 Unfortunately,  since the \gls{SLM} can only encode phase masks, we are not able to directly simulate the input state used to solve the \gls{VS} quantum one-way communication complexity protocol. Consequently, our experiment will be limited to the \gls{BPM}  protocol, which requires only phase encoding. However, at a fixed size of the problem, the \gls{VS} has a best-known classical protocol which is much more demanding in terms of communication cost with respect to the \gls{BPM}, making it a much better candidate to show a quantum advantage. To address the limitations imposed by the \gls{SLM}'s encoding capabilities,  we perform a numerical analysis in the Methods section to demonstrate that even when tackling the more general \gls{VS} problem, our programmable linear network can achieve performance levels comparable to those achievable in the more practically implementable \gls{BPM} quantum protocol.

However, motivated by the difficulty in implementing quantum protocols based on single-photon encoding, an alternative mapping of quantum communication complexity protocol was introduced \cite{arrazolaQuantumCommunicationCoherent2014}, where, instead of the single-photon state in Eq.\ \eqref{eq:single_phot_psi}, one simply implements a sequence of coherent states over $n$ optical modes.  Specifically, Alice's quantum state is given by 
\begin{equation}
    \ket{\alpha, x} =\bigotimes_{p=1}^{n} \left |\frac{\alpha}{\sqrt{n}}(-1)^{x_p}\right \rangle_p\;,
    \label{eq:alice_coherent}
\end{equation}
where $x \in \{0,1\}^n$ is Alice's input and $\alpha$ is a complex amplitude.
Specifically, the photon number distribution in each mode is the same as if one performed the measurement on multiple copies of the single-photon state from Eq.\ \eqref{eq:single_phot_psi}, with the number of copies following a Poisson distribution of mean $|\alpha|^2$. Because of the Holevo bound~\cite{holevoBoundsQuantityInformation1973}, any quantum state in an $n$-dimensional Hilbert space can carry at most $\log(n)$ bits of information. For this reason, we use the average transmitted information, given by $|\alpha|^2 \log(n)$, as a meaningful metric for the information that can be extracted from the coherent state in Eq.\ \eqref{eq:alice_coherent}.

In our experiment, we exploit this relationship by precisely controlling the average photon number $|\alpha|^2$ through modulation of the superluminescent diode's output intensity and adjustment of the \gls{EMCCD} camera's exposure time. This allows us to directly set the average transmitted information per round, providing a meaningful and fair metric for comparing the efficiency of our quantum protocol to classical benchmarks.

\subsubsection{Bob's decoding}
We are finally ready to explain how to harness the transmission matrix information to build a reconfigurable detection system. The general concept relies on digital phase conjugation \cite{fisherOpticalPhaseConjugation1983, popoffMeasuringTransmissionMatrix2010}, where each $p$-th input port of the  \gls{SLM} is encoded such that  optical field at the input of the \gls{MMF} has the form
\begin{equation}
    E_{in}^{(p)} = \textbf{T}^{(p)\dagger} \bm{\mathcal{L}}^{(p)}\;,
    \label{eq:lin_op}
\end{equation}
where $\bm{\mathcal{L}}^{(p)}$ is the $p$-th column of the linear operator, as illustrated in Figure \ref{fig:lin_op_pizza}.
In particular, since the \gls{SLM} is on the Fourier plane of the input face of the \gls{MMF},
the phase mask displayed on each port of the \gls{SLM} corresponds to the angular component of the Fourier transform of the input field in Eq.\  \eqref{eq:lin_op}.
Therefore, the combination of the light modulation from the \gls{SLM} and the mode mixing of the \gls{MMF} 
 results in an observed linear operator $\bm{\mathcal{L}}_{\text{obs}}$, where the column  encoded by the $p$-th input port has the form
\begin{equation}
    \bm{\mathcal{L}}_{\text{obs}}^{(p)} = \textbf{T}^{(p)}\textbf{T}^{(p)\dagger} \bm{\mathcal{L}}^{(p)}\;.
    \label{eq:L_operator_obs}
\end{equation}
 \begin{figure}[htb!]
         \centering
         \includegraphics[width=0.9\textwidth]{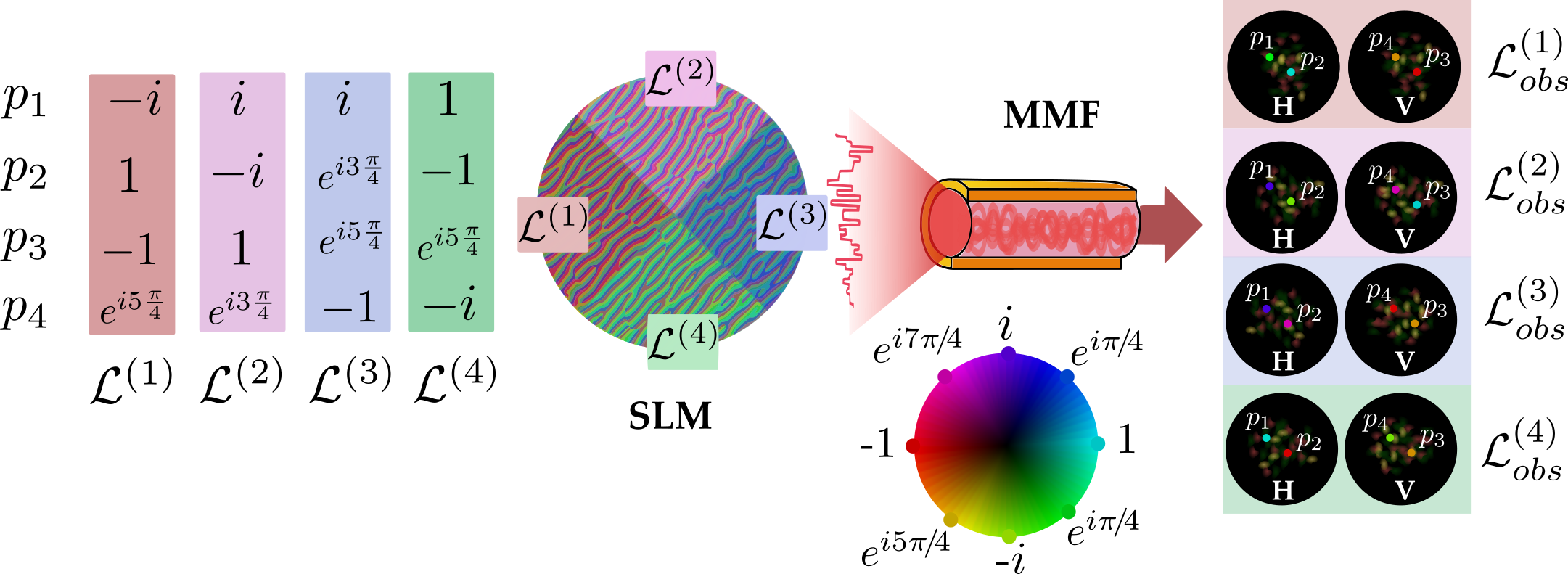}
\caption[Pictorial representation of the construction of a $4\times 4$ linear operator.]{Pictorial representation of the construction of a $4\times 4$ linear operator by Bob. Each port of the \gls{SLM} encodes a different column of the linear operator, controlling both the amplitude and the phase of the corresponding output state. The final output detected is a coherent sum of the contribution of each individual port.}
\label{fig:lin_op_pizza}
 \end{figure}
 
Bob can then program his reconfigurable optical detection system presented in Figure \ref{fig:setup_true} according to his classical input $y$ for the \gls{BPM} problem.
The output state impinging the camera after encoding on the \gls{SLM} Bob's linear operator $\bm{\mathcal{L}}_{\beta \text{PM}}(y)$, described in the Supplementary Material, has hence the form

\begin{align}
         \ket{\alpha, x,y} &= \bigotimes_{j=1}^{k}  \left |\alpha \sqrt{\frac{\eta}{n}}\sum_{p=1}^n (-1)^{x_p}\left(\textbf{T}^{(p)}\textbf{T}^{(p)\dagger} \bm{\mathcal{L}}^{(p)}_{\beta \text{PM}}(y)\right)_j\right \rangle_j \nonumber \\
         &=  \bigotimes_{j=1}^{k}\left |\alpha^{(x,y)}_j\right \rangle_j\;,
         \label{eq:coherent_out_BPM}
\end{align}

\noindent
 where we defined $\alpha^{(x,y)}_j\coloneqq \alpha \sqrt{\frac{\eta}{n}}\sum_{p=1}^n (-1)^{x_p}\left(\textbf{T}^{(p)}\textbf{T}^{(p)\dagger} \bm{\mathcal{L}}^{(p)}_{\beta \text{PM}}(y)\right)_j$ the new complex amplitudes for the  $j$-th output mode and $\eta$  the overall efficiency of optical setup.

Let $D_H$ ($D_V$) be the random variable corresponding to the number of clicks in the  $k/2$ output modes of the camera corresponding to the $H$ ($V$) region. Since the output state  in Eq.\ \eqref{eq:coherent_out_BPM} is still a tensor product of $n$ coherent states, these two variables are distributed according to Poissonian distributions  $D_H \sim \text{Poisson}\left( \mu_H =\sum_{j\in H} \left|\alpha^{(x,y)}_j\right|^2\right)$ and $D_V \sim \text{Poisson}\left(\mu_V =\sum_{j\in V} \left|\alpha^{(x,y)}_j\right|^2\right)$. Considering the bit  $a$ being the right answer, then the error probability $\epsilon$ of the protocol is
\begin{equation}
    \epsilon = 
    \begin{cases} 
        \text{Pr}(D_H < D_V) + \frac{1}{2} \text{Pr}(D_H = D_V) & \text{if } a = 0\;, \\
        \text{Pr}(D_H > D_V) + \frac{1}{2} \text{Pr}(D_H = D_V) & \text{if } a = 1\;.
    \end{cases}
\end{equation}


\subsubsection{Setup efficiency} Before presenting the experimental results, it is important to highlight the main sources of loss in our optical system, as summarized in Table~\ref{table:experiment-parameters}. 
 In particular, the efficiency of the optical pathway is limited by the reflectivity of the SLM and losses at the injection of the fiber. 
This is because the guiding core of the \gls{MMF} acts as a spatial filter (both spatially and in numerical aperture---i.e.\ angularly), and some of the high-frequency components of the \gls{SLM} patterns are filtered. 
The overall efficiency of the optical setup  in Eq.\ \eqref{eq:coherent_out_BPM} is therefore $\eta \approx 50\%$. 

\begin{table}[h] \centering 
\centering
\begin{tabular}{|l|c|}
\hline

\textbf{Parameter} & \textbf{Value} \\

\hline
SLM reflectivity: \,\, $\bm{\eta_{\textbf{SLM}}}$ & 90\% \\
\hline
MMF transmission: \, \, $\bm{\eta_{\textbf{MMF}}}$ & 80\% \\
\hline
Detector quantum efficiency: \,\, $\bm{\eta_{\textbf{det}}}$ & 75\% \\
\hline
Detection ports efficiency: \,\, $\bm{\eta_{\textbf{ports}}}$ & 8\% \\
\hline
\end{tabular}
\caption{Summary of the main sources of loss in the optical setup and their estimated efficiencies. The SLM efficiency ($\eta_{\text{SLM}}$) is determined by its reflectivity, the MMF efficiency ($\eta_{\text{MMF}}$) by the transmission through the fiber including injection losses, and the detection efficiency ($\eta_{\text{det}}$) by the quantum efficiency of the EMCCD camera. The port efficiency ($\eta_{\text{ports}}$) corresponds to the fraction of light collected by the selected detection ports.}\label{table:experiment-parameters} 
\end{table}

However, the main source of loss in the camera arises from utilizing only a limited number of macro pixels as detection ports, which represent a minor portion of the camera's active area. Notably, the total amount of light directed to the specific $k$ detection ports of the camera constitutes only
\begin{equation}
    \eta_{\text{ports}} = \frac{\sum_{j=1}^k\left|\alpha^{(x,y)}_j\right|^2}{\eta\alpha^2} \approx 8\%
\end{equation}
of the overall light impinging on the camera. This efficiency could be increased by using larger detection ports; however, once their size exceeds that of a speckle grain, they would also collect uncontrolled light carrying no information about Alice's encoding.

\subsection{Quantum \gls{BPM} and classical benchmark with \gls{VS}}
We are now ready to analyze the performance of our experiment and benchmark it against the best-known classical protocol for the \gls{VS} problem. 
For each partition of the input ports, we performed the \gls{BPM} protocol over $l=500$ instances of the variables $\bm{x} = (x_1, \dots, x_l)$ and $\bm{y} = (y_1, \dots, y_l)$. In particular, we fine-tuned the beam intensity and the exposure time of our \gls{EMCCD} camera to ensure the error probability estimated from the $l$ instances of the protocol was below a targeted error threshold.
  \begin{figure}[htb!]
         \centering
\includegraphics[width=0.8\textwidth]{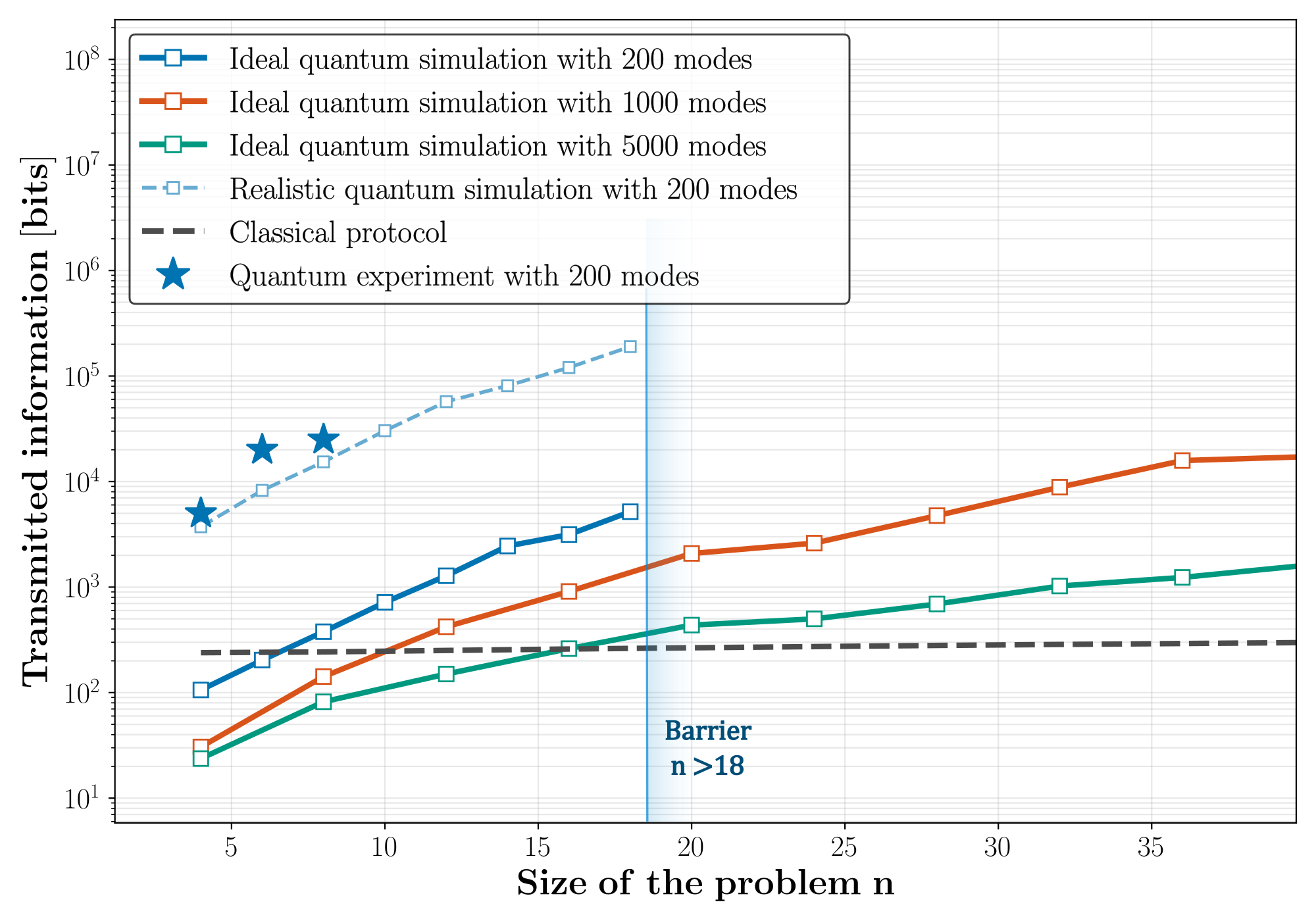}
\caption[Plot of the transmitted information vs.\ the input size $n$ for solving the $\beta$PM problem.]{Plot of the transmitted information vs.\ the input size $n$ for solving the \gls{BPM} problem within error probability $\epsilon = 0.2$. It includes quantum simulations for \gls{MMF}s with varying mode numbers ($N_{\text{modes}} = 200, 1000, 5000$), both in an idealized scenario with no end-to-end losses and no camera noise (solid line) and in a more realistic scenario including the measured setup losses and camera noise extrapolated from the datasheet (dashed line).
 For each mode count, the plots are color-coded (blue, red, and green, respectively). The barrier indicates the largest input size $n$ for which the target error can still be achieved; beyond this point, increasing the transmitted optical power no longer suffices to reduce the error below the required threshold. The graph also includes the transmitted bits required by the best-known classical protocol for the VS problem, estimated using the closed-form sufficient condition derived from Theorem \ref{theo:VS_communication} in Supplementary Corollary 1.}
\label{fig:exp_scaling}
 \end{figure}
 
We then measured the number of transmitted photons $|\alpha|^2$, and hence evaluated the transmitted information $|\alpha|^2 \log(n)$, by injecting the output of the photon source into an avalanche photodiode.
As shown in Figure \ref{fig:exp_scaling}, experimentally, we required $(5\cdot 10^3$, $2\cdot 10^4$, $2.5\cdot 10^4)$ bits of transmitted information to solve the \gls{BPM} problem with an error rate of $\epsilon = 0.2$ for $n=(4,6,8)$ ports, respectively.

We also compared the results to a set of numerical simulations for MMFs with $N_{\text{modes}}= (200, 1000, 5000)$ physical modes. Given a fiber with $N_{\text{modes}}$ physical modes and an optical configuration with $n$ inputs, each input port of the \gls{SLM} can be interpreted as effectively addressing approximately $N_{\text{modes}}/n$ physical modes. In the simulations, instead of directly measuring each single-port transmission matrix $\textbf{T}^{(p)}$, we therefore derived them by partitioning the full transmission matrix $\textbf{T}^{(\text{full})}$---measured with the entire \gls{SLM} treated as a single port---into $n$ submatrices. Specifically, we considered
$
\textbf{T}^{(\text{full})}
= \left[
\textbf{T}^{(1)}
\dots
\textbf{T}^{(n)}
\right].
$
As detailed in the Supplementary Material, the simulations were performed at pixel resolution, using the complex transmission-matrix coefficients for all $128\times256$ camera pixels in order to properly account for noise at the pixel level.

Finally, to simulate transmission matrices with a higher number of modes, we copied the initial transmission matrix and reshuffled the modes, thereby increasing the effective number of physical modes. In particular, we performed two types of simulations: (i) an ideal simulation, which sets a theoretical benchmark for the protocol, and (ii) a more realistic simulation that includes the detection noise characteristics of our camera, as detailed in the Supplementary Material. It is important to stress that the ideal simulation does not rely on a perfectly engineered linear transformation. In both cases, the simulations are performed using experimentally measured transmission matrices of the MMF, so that the mode-mixing properties of the actual device are retained. The ideal case differs from the realistic one in that it neglects end-to-end losses ($\eta = 1$) and assumes a noise-free camera.

Although our experiment is still several orders of magnitude away from the classical benchmark for the \gls{VS} problem, the ideal simulations demonstrate its potential to achieve a quantum advantage. Moreover, they highlight again how beneficial it would be to increase the number of effective modes of the \gls{MMF}.

It is however crucial to understand the gap between the ideal simulations and the experimental results. As shown in Fig.~\ref{fig:exp_scaling}, the realistic simulations are in good agreement with the measured data, indicating that the dominant limitations of the present implementation are well captured by the simulation including the measured losses and the camera noise.
In particular, this suggests that the current performance is largely constrained by the detection stage, rather than by a fundamental limitation of the wavefront-shaping approach itself. Significant improvements should therefore be achievable by optimizing the camera operation and, more generally, by increasing the signal-to-noise ratio of the detection stage.

Finally, one may wonder why the experiment is only implemented for the first three input sizes, while the realistic simulation for $N_{\mathrm{modes}}=200$ predicts successful operation up to the barrier at $n \approx 18$. The key point is that, although the realistic simulation accounts for the detection noise of the camera, it still assumes ideal wavefront control over the measured transmission matrix, namely perfect amplitude-and-phase shaping of the input optical field.
 In the experiment, however, the quality of the implemented transformation is further constrained by the finite number of SLM pixels that are effectively used, as well as by the fact that only a fraction of the total SLM surface lies within the active illuminated area (see Fig.~1 of the Supplementary Material). These limitations reduce the fidelity of the programmed operator and shift the experimental performance away from the simulated barrier. This gap could in principle be mitigated by increasing the number of effective SLM pixels and by optimizing the active area used for modulation.


\subsection{\gls{BPM} communication channel}
To further explore the trade-off between mean photon number and error rate, for different encoding size, we considered a  communication scenario where Alice and Bob use $l = 6 \cdot 10^3$ instances of the \gls{BPM} protocol to share a binary image of a fingerprint.  One can imagine that Alice and Bob have agreed in advance on a set of inputs $\bm{y} = (y_1,\dots, y_l)$. Alice can then simply choose accordingly each of her inputs $\bm{x} = (x_1,\dots, x_l)$ in order for her to transmit the fingerprint. Specifically the $j$-th binary pixel of the targeted image has the value $a_j = \beta \text{PM}(x_j,y_j)$.
 \begin{figure}[htb!]
         \centering
\includegraphics[width=0.6\textwidth]{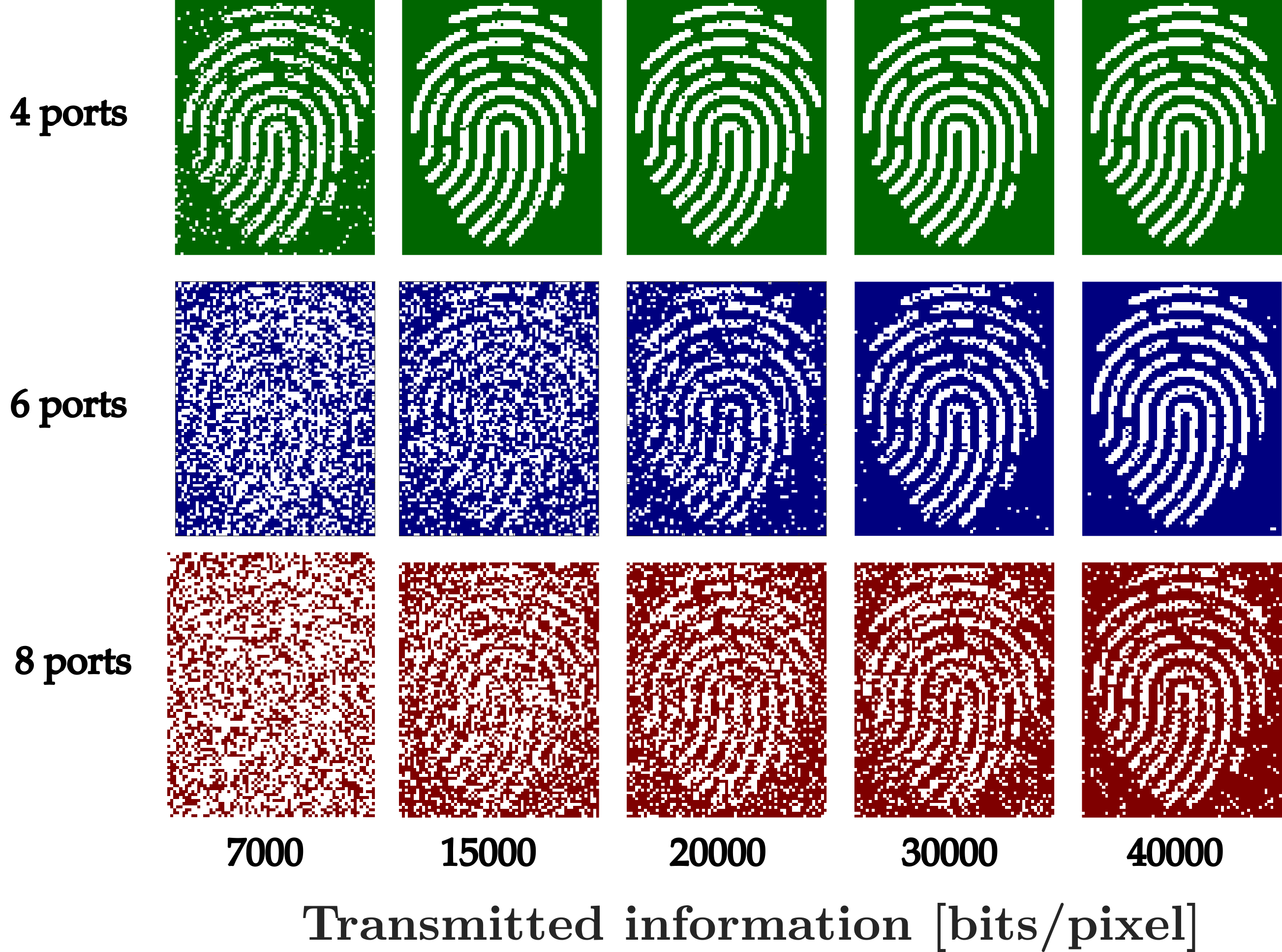}
\caption[Experimental transmission of a binary classical fingerprint image with $6\cdot 10^3$ pixels solving the $\beta$PM quantum protocol.]{Experimental transmission of a binary classical fingerprint image with $6\cdot 10^3$ pixels solving the \gls{BPM} quantum protocol.}
\label{fig:exp_fingerprint}
 \end{figure}

The experimental results in Figure \ref{fig:exp_fingerprint} clearly demonstrate that an image with $4$ ports becomes distinctly visible with as few as $7\cdot10^3$ bits of transmitted information per pixel. When the number of ports increases to $6$ and $8$, the quantity of transmitted information required for a clear image does rise significantly, yet the transmission of well-defined images remains achievable. For example, with $4\cdot 10^4$ bits transmitted per pixel, we can successfully transmit an image with $8$ ports while maintaining a pixel error rate of $\epsilon = 0.12$. See Figure \ref{fig:exp_fingerprint_plot}  for the full plot of the pixel error rates achieved for each image with respect to the number of transmitted information per pixel.

\begin{figure}[htb!]
         \centering
\includegraphics[width=0.65\textwidth]{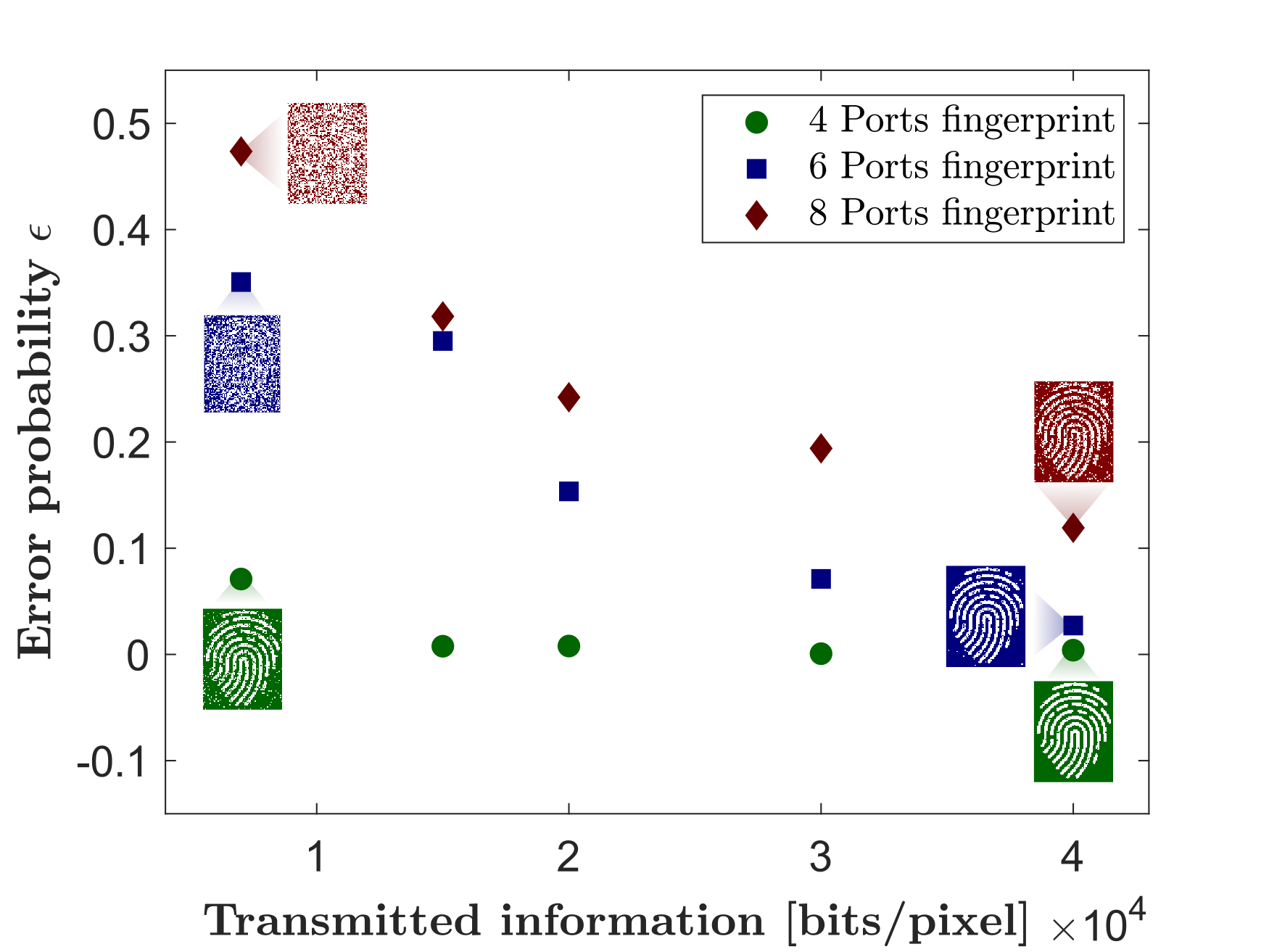}
\caption[Plot of error probabilities of the experimental transmission of a binary classical fingerprint solving the $\beta$PM problem.]{Plot of error probabilities of the experimental transmission of a binary classical fingerprint solving the \gls{BPM} problem.}
\label{fig:exp_fingerprint_plot}
 \end{figure}

 \newpage
\section{Discussion}
The investigation of novel quantum communication complexity problems with better separations stands as a vibrant and dynamic area of interest. In this context, having a sound and flexible platform for assessing quantum communication protocols and clearly showing the benefits of quantum technology is extremely valuable for advancing the field.
We have introduced one such optical platform leveraging spatial degrees of freedom, specifically designed for addressing quantum communication problems. Our strategy is based on the intricate mode mixing within a \gls{MMF}, coupled with the precise shaping of the incoming light wavefront.
This method’s key strength is its flexibility:  it provides a reconfigurable detection architecture with a very high level of programmability for one-way quantum communication complexity problems. This stands in stark contrast with previous experiments on communication complexity problems \cite{xuExperimentalQuantumFingerprinting2015a, kumarExperimentalDemonstrationQuantum2019}, where the optical setup was explicitly designed to perform a restricted set of linear operations. 

In particular, our reconfigurable detection platform is well suited to problems requiring the implementation of a large number of quantum linear operators. A representative example is the \gls{VS} problem, whose best-known classical protocol is much more demanding in terms of communication cost than the corresponding benchmark for the \gls{BPM} problem. Here, we observe that increasing the number of physical modes and reducing technical noise could move the platform toward a regime where practical one-way quantum implementations may outperform the best-known classical bounds. The number of physical modes could be increased, for instance, by engineering \gls{MMF}s with larger core diameters or higher numerical apertures.

 Finally,  we believe that by leveraging various multiplexing techniques, such as spectral, timing, and polarization, such an experimental platform can enable the solution of quantum communication problems in higher dimensions.  Another possibility to dramatically increase the size of the Hilbert space under manipulation is to use, as input, instead of single photons,  Fock states containing strictly more than one photon, offering another exciting area for further exploration.

\section{Methods}
 
 \subsection{Scalability and flexibility of the optical network}
The scalability and programmability of this general approach were studied in \cite{leedumrongwatthanakunProgrammableLinearQuantum2020} under the straightforward \gls{RMT} model, where the elements of the TMs are drawn from i.i.d.\ complex Gaussian random variables. Given a fiber with $N_{\text{modes}}$ physical modes and an optical configuration with $n$ input and $k$ outputs ports, the asymptotic scaling of the fidelity in this model  was found to be 
 \begin{equation}
         \mathcal{F}(\bm{\mathcal{L}}_{\text{obs}}, \bm{\mathcal{L}}) = 1- \mathcal{O}\Big( \sqrt{\frac{nk}{N_{\text{modes}}}}\Big)\;,
         \label{eq:fidelity_scaling}
 \end{equation}
defined as $\mathcal{F}(\bm{\mathcal{L}}_{\text{obs}}, \bm{\mathcal{L}}) \coloneqq 1 -\frac{\absnorm{\bm{\mathcal{L}}_{\text{obs}} - \bm{\mathcal{L}}}}{nk}$,
where $\absnorm{A} \coloneqq  \sum_{i=1}^k \sum_{j=1}^n |A_{ij}|$ is the $l_1$-vector norm.
There is therefore a critical relationship between the dimensions of linear operators and the number of physical modes $N_{\text{modes}}$. Specifically, to increase the dimensionality of square operators  ($k=n$) linearly, a quadratic increase is required in the number of physical modes ($N_{\text{modes}} = \mathcal{O}(n^2)$). 

 Finally, we are interested to see how this flexibility specifically applies when trying to solve \gls{VS} and \gls{BPM} problems, with a model which is closer to our experimental apparatus. Considering for the sake of simplicity lossless single-photon  protocols, we choose  as a figure of merit to compare the two protocols the \emph{visibility} $V$ of the boolean detection system, defined as the probability of each photon going in the right region of the camera
\begin{equation}
    V = \braket{\psi_{out}| \Pi(a) | \psi_{out}},
\end{equation}
 with $a \in \{0,1\}$ being the correct answer. The operators  $\Pi(0)= \sum_{i\in H} \ket{i}\!\bra{i},$ and $\Pi(1)= \sum_{i\in V} \ket{i}\!\bra{i}$
 represent the POVM associated with the boolean response. 
 \begin{figure}[htb!]
         \centering
         \includegraphics[width=0.6\textwidth]{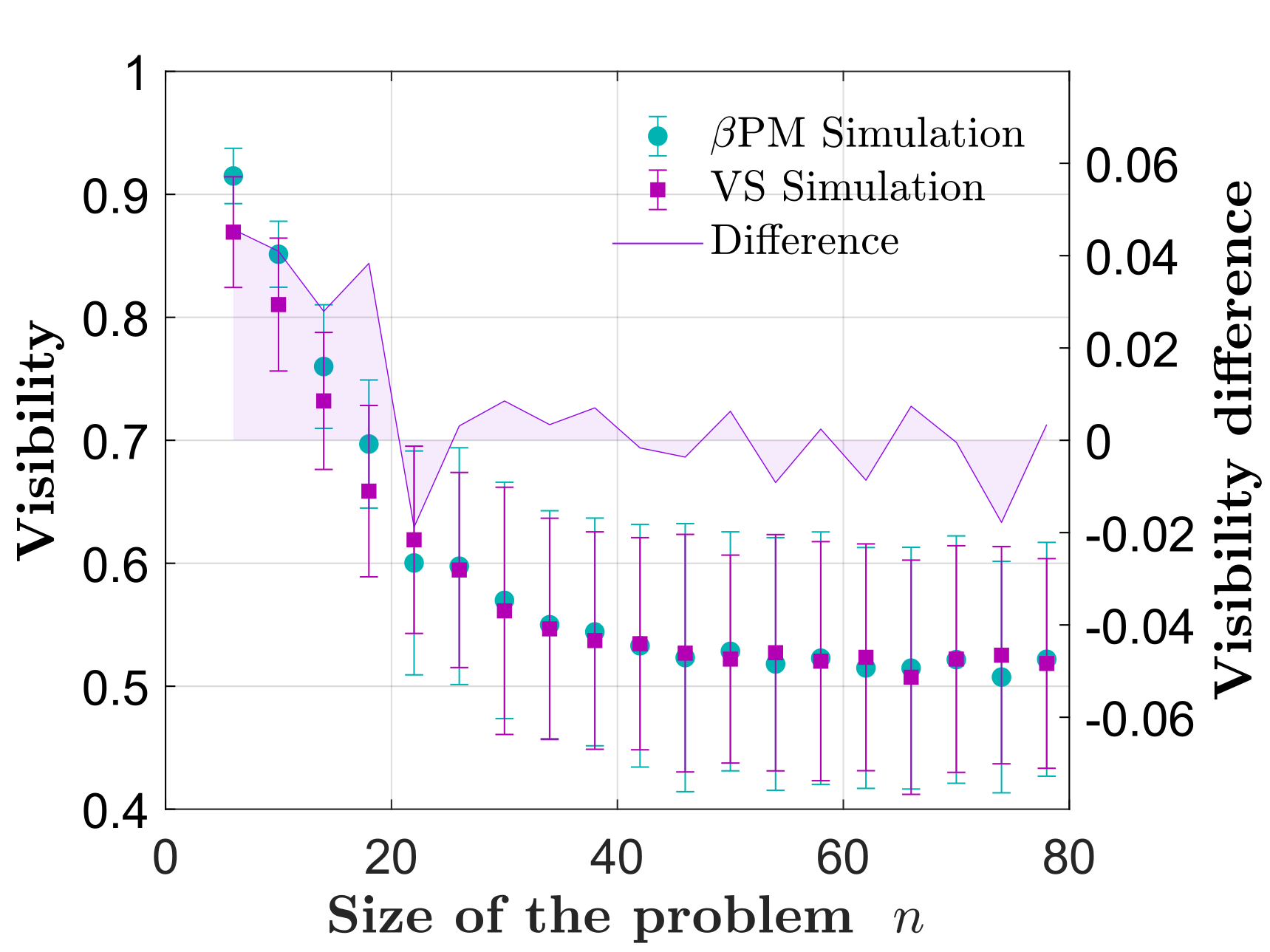}
\caption{Numerical simulation for the visibility of \gls{BPM} and \gls{VS} protocols. }
\label{fig:HM_vs_VSP_visibility}
 \end{figure}
Therefore, visibility can be interpreted as the probability of correctly guessing in a one-way quantum communication complexity problem, where a single photon is sent in a scenario without any loss. As displayed in Figure \ref{fig:HM_vs_VSP_visibility}, the visibility rapidly decreases with the size of the problem, which is in accordance with the scaling in Eq.\ \eqref{eq:fidelity_scaling}. Moreover, the visibility difference between the \gls{BPM} and \gls{VS} protocols is at most $4\%$, showcasing that the effectiveness of our detection method is influenced not by the specific linear operator used but solely by the size of the linear network. For the \gls{BPM} protocol, we set $\beta = 1/2$ in order to avoid additional losses due to partial matchings and ensure a fair comparison with \gls{VS}; the same choice is used in the experiment.

\subsection{Calibration}
To operate the optical setup as a reprogrammable optical network, we  perform a calibration routine. The setup is sufficiently stable that this calibration is required only approximately once a week.
First, we need to define how to subdivide it into distinct input ports. While a division in square grids has already been proposed in \cite{matthesOpticalComplexMedia2019}, we decide to preserve the radial symmetry of the Gaussian input source by dividing the \gls{SLM} into different pizza-shaped slices, which allows to more easily equalize intensities. However, we need first to characterize the Gaussian beam coming from the superluminescent diode and how it couples to the \gls{MMF}. To do this, we performed a detailed scan of the SLM, mapping out how light from each pixel of the \gls{SLM} impinges on the \gls{MMF} (see Supplementary Material).

Once we have characterized the active zone of the \gls{SLM}, we can then divide it into $m$ different slice-shaped ports, where our goal is to make sure that each port is controlling the same amount of light  (see Supplementary Material). In particular, this is crucial when we want to simulate a quantum state coming from Alice, such as the one for the \gls{BPM} protocol, which is a balanced superposition where the classical information is encoded in the phase. 

Finally, we fully characterize our \gls{MMF} by independently measuring the transmission matrix $\textbf{T}^{(p)}$, linking the relevant input modes for each $p$-th input port to the targeted output modes \cite{defienneTwophotonQuantumWalk2016}. In our experiment,  we define the input modes as a series of focused spots arranged on an isometric grid at the input facet of the \gls{MMF}.  To ensure a complete characterization of the fiber we always consider $N_{\text{input}} = 600$  different spots, a number higher than the supported modes by the MMF.  Each input mode corresponds to a specific phase ramp displayed on the \gls{SLM}, while the output modes are simply $k$  macro pixels of the \gls{EMCCD} camera,  each of which is approximately the size of an average speckle grain.
\vspace{0.5cm}

\noindent {\large \textbf{Acknowledgments}}\\
F.M. and R.A. acknowledge the financial support from the European Union’s Horizon Europe research and innovation programme under the project QSNP (Grant No. 101114043) and the Digital Europe programme under the project FranceQCI (project ID 101091675). R.A. acknowledges the financial support of France 2030, to the QCommTestbed project of the PEPR program, reference 22-PETQ-0011.
H.D. acknowledges funding from the ERC Starting Grant (number SQIMIC-101039375) and the ANR (ANR-24-CE97-0001 and ANR-23-CE47-0014).
S.G. is senior member of the IUF (institut Universitaire de France).
The authors thank Michał D\k{a}browski for assistance with the experimental setup preparation and for useful discussions on the setup limitations.

\newpage

\putbib[Bibliography_QCC3]
\end{bibunit}

\begin{bibunit}[quantum]
\clearpage
\pdfbookmark[0]{Supplementary Material}{supplementary-material}
\begingroup
\centering
{\large\bfseries Supplementary Material: Reconfigurable Optical Network for Quantum Communication Complexity\par}
\vspace{1.0em}
{Francesco Mazzoncini,$^{1,2}$ Hugo Defienne,$^{3}$ Romain All\'eaume,$^{1}$ and Sylvain Gigan$^{4}$\par}
\vspace{0.8em}
{\small
$^{1}$T\'el\'ecom Paris-LTCI, Institut Polytechnique de Paris, 19 Place Marguerite Perey, 91120 Palaiseau, France\par
$^{2}$Orange Innovation, Orange Gardens, 44 Avenue de la R\'epublique, 92326 Ch\^atillon, France\par
$^{3}$Sorbonne Universit\'e, CNRS, Institut des NanoSciences de Paris, INSP, Paris, France\par
$^{4}$Laboratoire Kastler Brossel, ENS-Universit\'e PSL, CNRS, Sorbonne Universit\'e, Coll\`ege de France, 24 rue Lhomond, Paris, France\par
}
\endgroup
\vspace{2em}

\setcounter{section}{0}
\setcounter{subsection}{0}
\setcounter{subsubsection}{0}
\setcounter{equation}{0}
\setcounter{figure}{0}
\setcounter{table}{0}
\setcounter{theorem}{0}
\setcounter{corollary}{0}
\setcounter{lemma}{0}
\glsresetall

\makeatletter
\renewcommand*{\theHsection}{supp.\arabic{section}}
\renewcommand*{\theHsubsection}{supp.\arabic{section}.\arabic{subsection}}
\renewcommand*{\theHequation}{supp.\arabic{equation}}
\renewcommand*{\theHfigure}{supp.\arabic{figure}}
\renewcommand*{\theHtable}{supp.\arabic{table}}
\providecommand*{\theHtheorem}{}
\renewcommand*{\theHtheorem}{supp.\arabic{theorem}}
\providecommand*{\theHcorollary}{}
\renewcommand*{\theHcorollary}{supp.\arabic{corollary}}
\providecommand*{\theHlemma}{}
\renewcommand*{\theHlemma}{supp.\arabic{lemma}}

\def\hyper@natlinkstart#1{%
  \Hy@backout{#1}%
  \hyper@linkstart{cite}{cite.supp.#1}%
  \def\hyper@nat@current{#1}%
}
\def\hyper@natlinkbreak#1#2{%
  \hyper@linkend#1\hyper@linkstart{cite}{cite.supp.#2}%
}
\def\hyper@natanchorstart#1{%
  \Hy@raisedlink{\hyper@anchorstart{cite.supp.#1}}%
}
\makeatother
\section{Complete simulation of the setup}
\subsection{Full-pixel output description}
When building the linear operator as in Eq.\ (7) of the main text, one needs to characterize the complex values transmission matrices only for a fixed number of detection ports $k$, each consisting of multiple pixels. However, to compute a complete simulation of the setup, we took into account the contribution of the transmission matrix for each  of the $N_{\text{pixel}}=  32768$ pixels of our camera. Let's call this $N_{\text{pixel}} \times \frac{N_{\text{input}}}{n}$ matrix  for an input port $p$ $\textbf{T}^{(p)}_{\text{pixel}}$. 
For each of the $p_1, \dots, p_k$ detection modes,  we select a target pixel $i^\star_{p_i}$ (typically the central pixel) on which to focus the light.  We call $S^\star = \{i^\star_{p_1}, \dots , i^\star_{p_K}\}$.
Starting from the linear operator for the input port $p$, $\bm{\mathcal{L}}^{(p)}$, we construct a “masked” operator, $\tilde{\bm{\mathcal{L}}}^{(p)}$ , consisting of a   $N_{\text{pixel}}$ vector obtained by zeroing all the elements corresponding to pixels outside the central pixel of the detection modes

\begin{equation}
\left(\tilde{\bm{\mathcal{L}}}^{(p)}\right)_i = \begin{cases} \left(\bm{\mathcal{L}}^{(p)}\right)_i, & \text{if } i \in S^\star, \\ 0 & \text{if } i \notin S^\star. \end{cases} \end{equation}

\noindent
One can now write an equation similar to Eq.\ (7)
 of the main text to model the observed contribution of the simulated linear operator over all the pixels of the camera

 \begin{equation}
    \bm{\mathcal{L}}_{\text{pixel}}^{(p)} = \textbf{T}^{(p)}_{\text{pixel}}\textbf{T}^{(p)\dagger} _{\text{pixel}}\bm{\tilde{\mathcal{L}}}^{(p)}\;.
\end{equation}

\noindent
Accordingly, one can derive a pixel-equivalent of Eq.\ (8) of the main text

\begin{align}
       \ket{\alpha\widetilde{, x,}y}  &= \bigotimes_{i=1}^{N_{\text{pixel}}}  \left |\alpha \sqrt{\frac{\eta}{n}}\sum_{p=1}^n (-1)^{x_p}\left(\textbf{T}^{(p)}_{\text{pixel}}\textbf{T}^{(p)\dagger} _{\text{pixel}}\bm{\tilde{\mathcal{L}}}^{(p)}_{\beta \text{PM}}(y)\right)_i\right \rangle_i \nonumber \\
         &=  \bigotimes_{i=1}^{N_{\text{pixel}}}\left |\tilde{\alpha}^{(x,y)}_i\right \rangle_i\;,
         \label{supp:eq:finalestate_pixel}
\end{align}

\noindent
where we defined $\tilde{\alpha}^{(x,y)}_i\coloneqq \alpha \sqrt{\frac{\eta}{n}}\sum_{p=1}^n (-1)^{x_p}\left(\textbf{T}^{(p)}_{\text{pixel}}\textbf{T}^{(p)\dagger} _{\text{pixel}}\bm{\tilde{\mathcal{L}}}^{(p)}_{\beta \text{PM}}(y)\right)_i$ the new complex amplitudes for the  $i$-th pixel of the camera and Bob's linear operator $\bm{\mathcal{L}}_{\beta \text{PM}}(y)$ is the operator used to solve the $\beta$PM protocol, fully characterized in Section \ref{supp:section: bpm_protocol} of this Supplementary Material.

\subsection{Camera noise analysis}
In this section we  introduce a general detailed physical model for
EMCCD noise properties \cite{hirschStochasticModelElectron2013}. In Table \ref{supp:table:symbols} we present all the mathematical symbols used to describe the model.

Let $n^i_{pe}$  be the random variable representing the number of photoelectrons generated by the signal impinging the $i$-th pixel. Since the output state  $\ket{\alpha\widetilde{, x,}y}$,  is a  tensor product of $N_{\text{pixel}}$ coherent states, these variables follow Poisson distributions

\begin{align*}
 \text{P}(n^i_{pe}\,|\,\alpha, x, y)& = \text{Poisson}(n^i_{pe}\,|\,\tilde{\mu}_i)\\
 &\coloneqq e^{-\tilde{\mu}_i}\,\frac{\tilde{\mu}_i^{\,n^i_{pe}}}{\bigl(n^i_{pe}\bigr)!},
\end{align*}
\noindent
where $\tilde{\mu}_i\coloneqq |\tilde{\alpha}^{(x,y)}_i|^2$.  From now on, we will drop the $i$ index for simplicity, as long as it is clear from the context that we are referring to the $i$-th pixel.

\begin{table}[h!]
\centering
\begin{tabular}{|c|l|}
\hline
\multicolumn{2}{|l|}{\textbf{Model parameters}} \\
\hline
$\alpha$ & \parbox[t]{10cm}{Coherent state amplitude} \\
$x, y$ & \parbox[t]{10cm}{Communication Complexity inputs} \\
$c$ & \parbox[t]{10cm}{Spurious charge (dark current and clock induced charge) in electrons} \\
$g$ & \parbox[t]{10cm}{Gain of the electron multiply (EM) register} \\
$r$ & \parbox[t]{10cm}{Readout noise in electrons} \\
$f$ & \parbox[t]{10cm}{A/D factor in electrons per image value} \\
$t_{exp}$ & \parbox[t]{10cm}{Exposure time} \\
$p_{dark}$ & \parbox[t]{10cm}{Dark current rate} \\
$c_{CIC}$ & \parbox[t]{10cm}{Clock Induced Charge rate} \\
\hline
\multicolumn{2}{|l|}{\textbf{Quantities of the signal flow}} \\
\hline
$n_{pe}$ & \parbox[t]{10cm}{Number of photoelectrons} \\
$n_{ie}$ & \parbox[t]{10cm}{Number of input electrons of the EM register} \\
$n_{oe}$ & \parbox[t]{10cm}{Number of output electrons of the EM register} \\
$n_{ic}$ & \parbox[t]{10cm}{Number of image counts (pixel value of the digital image)} \\
\hline
\end{tabular}
\caption{Table of mathematical symbols.}
\label{supp:table:symbols}
\end{table}

When deploying an EMCCD camera for low-light measurements, additional sources of noise must be considered. Alongside the photoelectrons generated by incoming photons, spurious charge also contributes to the noise. This spurious charge follows a Poisson distribution with an emission rate $c$. Specifically, this rate can be divided into two main components: Clock Induced Charge (CIC) and dark current. CIC arises from the rapid transfer of charge through the camera's pixels during readout, while dark current is generated by thermal electrons within the detector. While the dark current is time-dependent,
the CIC only depends on the number of readout processes.  Consequently, the overall emission rate per pixel is given by $c = t_{exp} \cdot p_{dark} + c_{CIC}$, where $t_{exp}$ is the exposure time. Thus, the number of electrons entering the EM register, $n_{ie}$, is described by the sum of these two contributions

\begin{equation}
 \text{P}(n_{ie}\,|\,\alpha, x, y, c) =\text{Poisson}(n_{ie}\,|\, \mu + c).
\end{equation}
These charges are then sent through the electron-multiplication register, which amplifies the number of input electrons by a mean gain factor $g$. Following the stochastic model introduced in Ref.~\cite{hirschStochasticModelElectron2013}, the conditional distribution of the number of output electrons $n_{oe}$ is described by a Poisson--gamma mixture. Let $\lambda := \mu+c$. The distribution of the
number of output electrons is
\begin{equation}
P(n_{oe}\mid\alpha,x,y,c,g)
=
\begin{cases}
\displaystyle
\sum_{n_{ie}=1}^{n_{oe}}
\operatorname{Poisson}(n_{ie}\mid\lambda)
G(n_{oe}\mid n_{ie},g),
& n_{oe}>0,\\[2ex]
e^{-\lambda},
& n_{oe}=0,\\
0,
& n_{oe}<0,
\end{cases}
\end{equation}
where
\begin{equation}
G(n_{oe}\mid n_{ie},g)
=
\frac{n_{oe}^{\,n_{ie}-1}e^{-n_{oe}/g}}
     {g^{n_{ie}}\Gamma(n_{ie})},
\qquad n_{oe}\geq n_{ie}>0.
\end{equation}

Finally, the readout register converts the analogue signal into discrete image values using the analogue-to-digital proportionality factor $f$, which is the number of electrons per image value. The readout noise is modeled by a normal distribution with standard deviation $r$, $N(fn_{ic}| n_{oe}, r)$.  Thus, the probability of measuring the digital pixel value $n_{ic}$  can be written as

\begin{equation}
  \text{P}(n_{ic}\,|\,\alpha, x, y, c, g, r)   =  \sum_{n_{oe}=0}^{\infty} N(f n_{ic}\,|\, n_{oe}, r) \cdot\text{P}(n_{oe}\,|\,\alpha, x, y, c, g) .
\end{equation}
Finally, the protocol’s output bit is obtained by summing the pixel values over the active $H$ and $V$ regions of the camera and comparing the two totals.

\subsection{Modeling losses in the full simulation}
When running the full simulation, it is essential to specify how optical losses are modeled and where they are applied. This is particularly important for quantum communication complexity: overestimating the overall efficiency $\eta$ leads to a systematic underestimation of the total communication cost, roughly by a factor $\simeq 1/\eta$. Moreover, when interpreting communication cost—especially toward cryptographic uses \cite{mazzonciniHybridQuantumCryptography2025a}—one should distinguish losses affecting photons that actually carry information from losses incurred on photons that do not.

\begin{itemize}
    \item 
      Source-side (Alice) losses. In principle, losses occurring on Alice’s setup can be factored out of the communication cost, provided that the corresponding lost photons do not leak to an adversary. In standard cryptographic models, an eavesdropper is assumed to have access only to Alice’s output state, i.e. the state injected into the quantum channel, so internal attenuation before this point is not charged as communication. However, in our implementation Alice and Bob share a single SLM (see Figure 2 in the main text); as a result, the SLM’s attenuation takes place after Alice’s logical encoding and directly affects the state that would enter the channel. We therefore treat SLM‑induced attenuation as loss on information‑carrying photons and include it, as a factor  $\eta_{\text{SLM}}$,  in the end-to-end efficiency $\eta$.

\item Transmission (Alice $\rightarrow$ Bob) losses. In an ideal one-way quantum communication complexity protocol, Alice and Bob are physically separated; thus, the communication medium (free space, MMF, multicore fibers, etc.) and the distance introduce additional attenuation. By contrast, in our experiment both parties operate on the same SLM, so there is effectively no physical channel. Accordingly, we model zero path length and add no extra transmission loss (i.e., $\eta_{\text{channel}} = 1$).
\item  Bob's detection losses.  As explained in the ``Setup efficiency'' subsection of the main text, here the main contributions—beyond the SLM already discussed—are the  losses induced by the multimode fiber, with overall efficiency $\eta_{\text{MMF}}$, and the camera  quantum efficiency $\eta_{\text{det}}$.
\end{itemize}

Therefore, the end-to-end transmissivity used in Eq.~\eqref{supp:eq:finalestate_pixel} is
\begin{equation}
  \eta \;=\; \eta_{\text{SLM}} \cdot \eta_{\text{channel}} \cdot \eta_{\text{MMF}} \cdot \eta_{\text{det}}.
\end{equation}
These factors are parameters we can set in the simulation.
In contrast, the efficiency term $\eta_{\text{ports}}$—arising from the fact that only the pixels in the  $H$ and $V$ regions are used as detection modes—is \emph{not} a free parameter. It is determined by the transmission matrices and the chosen linear operator as 
\begin{equation}
  \eta_{\text{ports}}
  \;=\;
  \frac{\sum_{i \in H \cup V} \left|\tilde{\alpha}^{(x,y)}_i\right|^2}{\eta\,\alpha^2}.
\end{equation}

\subsection{Different scenarios considered}
This noise model was used to generate the simulations shown in Figure~4 of the main text. In particular, we considered two scenarios: an ideal case with no end-to-end losses and a noise-free camera, and a more realistic case using the noise parameters of our Andor iXon3 860, as reported in the datasheet and summarized in Table~\ref{supp:table:comparison}. It is important to stress that, in both scenarios, we used experimentally measured transmission matrices. Therefore, the ideal case does not correspond to a perfectly engineered linear transformation, but only to a simulation in which detection noise is neglected and the overall efficiency is set to $\eta = 1$. By contrast, the realistic case includes the measured losses of our setup, yielding an end-to-end efficiency of $\eta = 54\%$.

\begin{table}[h!]
\centering
\begin{tabular}{|c|c|c|}
\hline
\textbf{Parameter} & \textbf{Ideal Case} & \textbf{Realistic case} \\
\hline
\(\eta\) & $100\%$ & $54\%$ \\
\hline
\(r\) & 0 & 49 \(e^-\)  \\
\hline
\(p_{dark}\) & 0 & \(1 \cdot 10^{-3}\) e/(pixel\(\cdot\)sec) \\
\hline
\(c_{CIC}\) & 0 & \(5 \cdot 10^{-3}\) e/pixel  \\
\hline
\(g\) & No EM amplifier & 300 \\
\hline
\end{tabular}
\caption{Parameters used in the two simulation scenarios. In both cases, the simulations are performed using experimentally measured transmission matrices. The ideal case assumes no end-to-end losses and a noise-free camera, whereas the realistic case includes the measured losses of the setup and the camera noise parameters.}

\label{supp:table:comparison}
\end{table}

\section{$\beta$PM Quantum protocol}

\label{supp:section: bpm_protocol}

In the quantum protocol for the $\beta$PM problem, Alice prepares a coherent state encoding her input bit string $x$ as described in Eq.~(6) of the main text. Bob's decoding strategy is then to apply a $2\beta n \times n$ linear operator $\bm{\mathcal{L}}_{\beta \mathrm{PM}}(y)$---which is fully determined by his classical input $y = (M, \omega)$---to the incoming coherent state. 
 This linear operator can be decomposed into a sequence of permutations and pairwise interference operations
\begin{equation}
\bm{\mathcal{L}}_{\beta \mathrm{PM}}(y) = \mathbf{P}^{(\omega)} \mathbf{H} \mathbf{P}^{(M)}\;.
\label{supp:eq:unitary_BPM}
\end{equation}

Given the matching $M = (i_1, j_1), (i_2, j_2), \ldots, (i_{\beta n}, j_{\beta n})$, the permutation matrix $\mathbf{P}^{(M)}$ of size $2\beta n \times n$ is constructed as follows. For $k = 1, 2, \ldots, \beta n$,
\begin{equation}
    P^{(M)}_{2k - 1, i_k} = 1, \quad P^{(M)}_{2k, j_k} = 1,
\end{equation}
and all other entries of $\mathbf{P}^{(M)}$ are zero. This operation selects and reorders the $n$ input modes so that each pair corresponding to an edge in the matching $M$ is mapped to adjacent output modes, effectively filtering out modes not involved in the matching
\begin{align}
    \mathbf{P}^{(M)}\ket{\alpha, x} &= \bigotimes_{k=1}^{\beta n} 
    \left(\left|\frac{\alpha}{\sqrt{n}}(-1)^{x_{i_k}}\right\rangle_{2k-1}
    \otimes
    \left|\frac{\alpha}{\sqrt{n}}(-1)^{x_{j_k}}\right\rangle_{2k}\right)\;.
\end{align}
Next, the operator $\mathbf{H}$ implements pairwise interference between the modes associated with each edge, and is given by
\begin{equation}
\mathbf{H} = \mathbb{I}_{\beta n} \otimes H_2,
\end{equation}
where $H_2$ is the $2 \times 2$ Hadamard matrix acting on each pair, and $\mathbb{I}_{\beta n}$ is the identity on the set of edges. This step mixes the amplitudes of each pair, such that the resulting coherent state encodes the parity information of Alice's input bits for each edge

\begin{align}
      \mathbf{H} \mathbf{P}^{(M)} \ket{\alpha, x} &=
    \bigotimes_{k=1}^{\beta n}
    \left|
        \frac{\alpha}{\sqrt{2n}}
        \left(
            (-1)^{x_{i_k}} + (-1)^{x_{j_k}}
        \right)
    \right\rangle_{2k-1}
    \otimes
    \left|
        \frac{\alpha}{\sqrt{2n}}
        \left(
            (-1)^{x_{i_k}} - (-1)^{x_{j_k}}
        \right)
    \right\rangle_{2k}\,.
\end{align}

After these operations, the modes are reorganized according to Bob's auxiliary input $\omega$. Let $\tilde{\mathbf{P}}^{(\omega)}$ be a $2\beta n \times 2\beta n$ permutation matrix defined by the binary vector $\omega$, where for each pair $(2i-1, 2i)$ with $i = 1, 2, \ldots, \beta n$,
\begin{equation}
\tilde{P}^{(\omega)}_{jk} = 
\begin{cases} 
    1 & \text{if } (j, k) = (2i - 1 + \omega_i, 2i - 1) \text{ or } (j, k) = (2i - \omega_i, 2i), \\
    0 & \text{otherwise}.
\end{cases}
\end{equation}
This ensures that if $\omega_i = 1$, the $i$-th pair $(2i-1, 2i)$ is swapped, while if $\omega_i = 0$, the pair remains unchanged.

To facilitate the final detection, we introduce a permutation matrix $\mathbf{G}$ that reorganizes all the modes such that the modes with odd indices are mapped to the first $\beta n$ output modes (H region of the camera) and the even ones to the last $\beta n$ modes (V region of the camera)
\begin{equation}
G_{ij} = 
\begin{cases} 
    1 & \text{if }\, j = 2i - 1 \,\text{ for } i = 1, \ldots, \beta n, \\
    1 & \text{if }\, j = 2(i - \beta n) \, \text{ for } i =  \beta n + 1, \ldots, 2\beta n, \\
    0 & \text{otherwise}.
\end{cases}
\end{equation}
The final permutation matrix $\mathbf{P}^{(\omega)}$ is then obtained by combining the two operators
\begin{equation}
    \mathbf{P}^{(\omega)} = \mathbf{G} \tilde{\mathbf{P}}^{(\omega)}.
\end{equation}

After applying the full linear operator $\bm{\mathcal{L}}_{\beta \mathrm{PM}}(y)$, the output is a set of coherent states whose intensities in the two detection regions encode the answer to the $\beta$PM problem. The measurement is performed by summing the detected intensities in each region, and the protocol's output is determined by which region receives the higher total intensity.

\section{Input ports calibrations}
\label{supp:appendix:input_ports}

\subsection{Active zone of the SLM}
\label{supp:appendix:active_zone}
 Our approach begins with displaying a phase ramp on the \gls{SLM} and collecting the resulting speckle pattern. The key observation comes from noticing changes in the speckle pattern only when we vary the phase value of pixels that reflect light inside the \gls{MMF}. Therefore,  as illustrated in Figure \ref{supp:fig:activezone}, by adding a phase shift of $\pi$ to each macro pixel (each sized $4\times 4$) and analyzing the correlation $ C(X,Y)$ between the original speckle image (without any additional phase shift) and the modified image (with the phase shift at macro pixel$(x,y)$) we can effectively reconstruct the Gaussian beam's profile. 
  In particular, this technique allows us to accurately find the position of the beam's center, which is crucial to divide the \gls{SLM} into well-balanced slices.
  \begin{figure}[htb!]
         \centering
\includegraphics[width=0.95\textwidth]{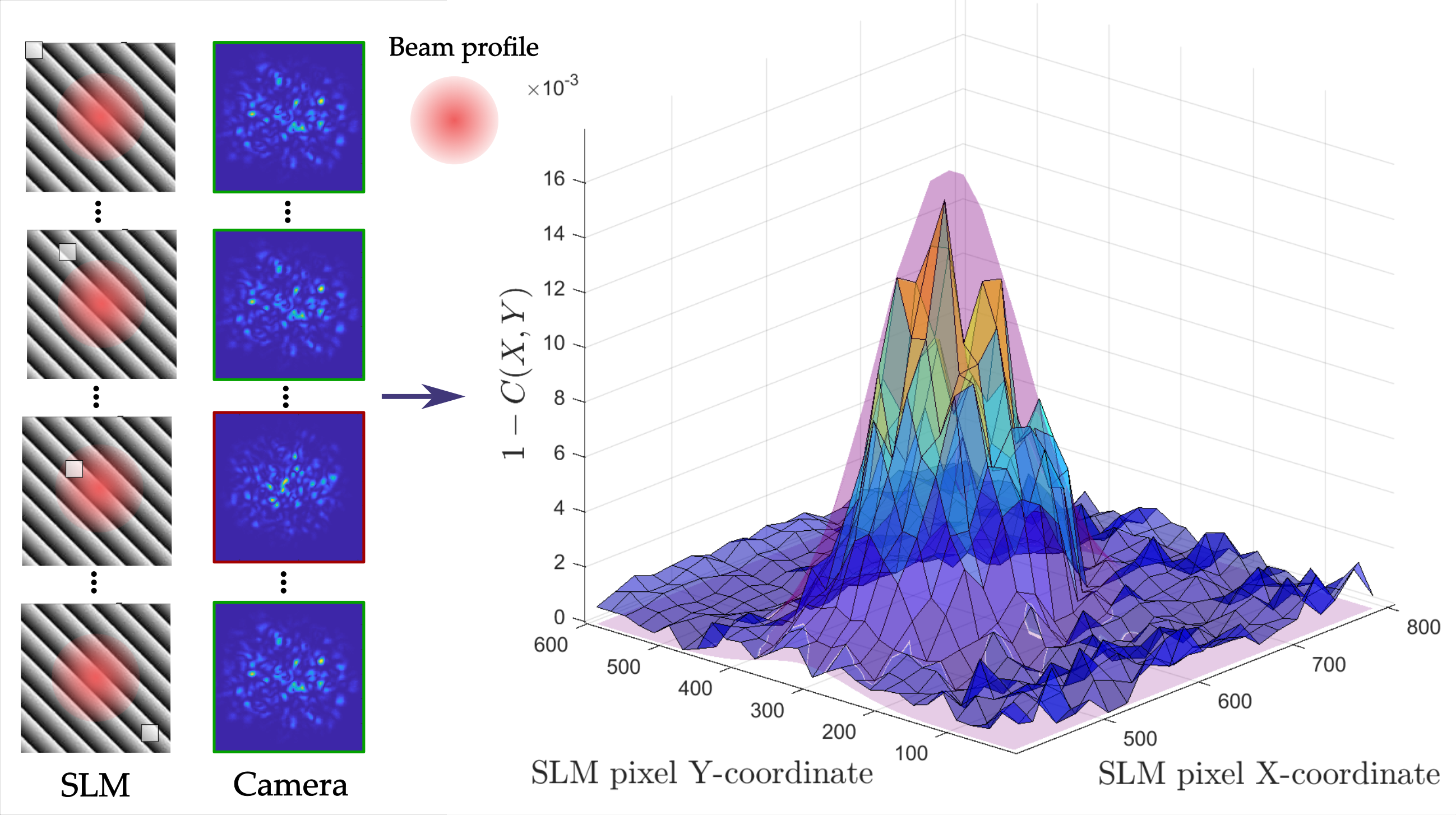}
\caption[Analysis of the active zone of the SLM.] {Analysis of the active zone of the \gls{SLM}. The reconstruction of the Gaussian beam's profile is achieved by applying a $\pi$ phase shift to each macro pixel, identified by coordinates $(x,y)$, and examining the correlation $C(X,Y)$ between the original and modified speckle images. }
\label{supp:fig:activezone}
 \end{figure}

 \subsection{Dividing the SLM in slices}
\label{supp:appendix:slm_slicing}
Our technique for dividing \gls{SLM} into $m$ well-balanced ports, illustrated in Figure \ref{supp:fig:slices_scanning}, works as follows. Initially, we display a uniform phase ramp across the entire \gls{SLM} and record the total light intensity, $ \text{I}^{tot} $. Next, we set the SLM to a completely black state (phase $ 0 $ across all areas), ensuring no light reaches the \gls{MMF} as it aligns with the first diffraction order. We then apply the phase ramp to a specific slice of the SLM, defined by a starting angle $ \alpha_1^{in} = 0 $ and a variable ending angle $ \alpha_1^{out} $, initially set to $ 0 $. This angle is gradually increased until the light intensity collected from this section, $ \text{I}_1^{tot} $, reaches $ \text{I}^{tot}/m $. Subsequent ports are scanned in the same way starting from the ending angle of the previous port $ \alpha_p^{in} = \alpha_{p-1}^{out} $. The final port $ m $ is uniquely defined by the start angle $ \alpha_m^{in} = \alpha_{m-1}^{out} $ and ends at $ \alpha_m^{out} = 2\pi $, completing the circle.




  \begin{figure}[htb!]
         \centering
\includegraphics[width=0.7\textwidth]{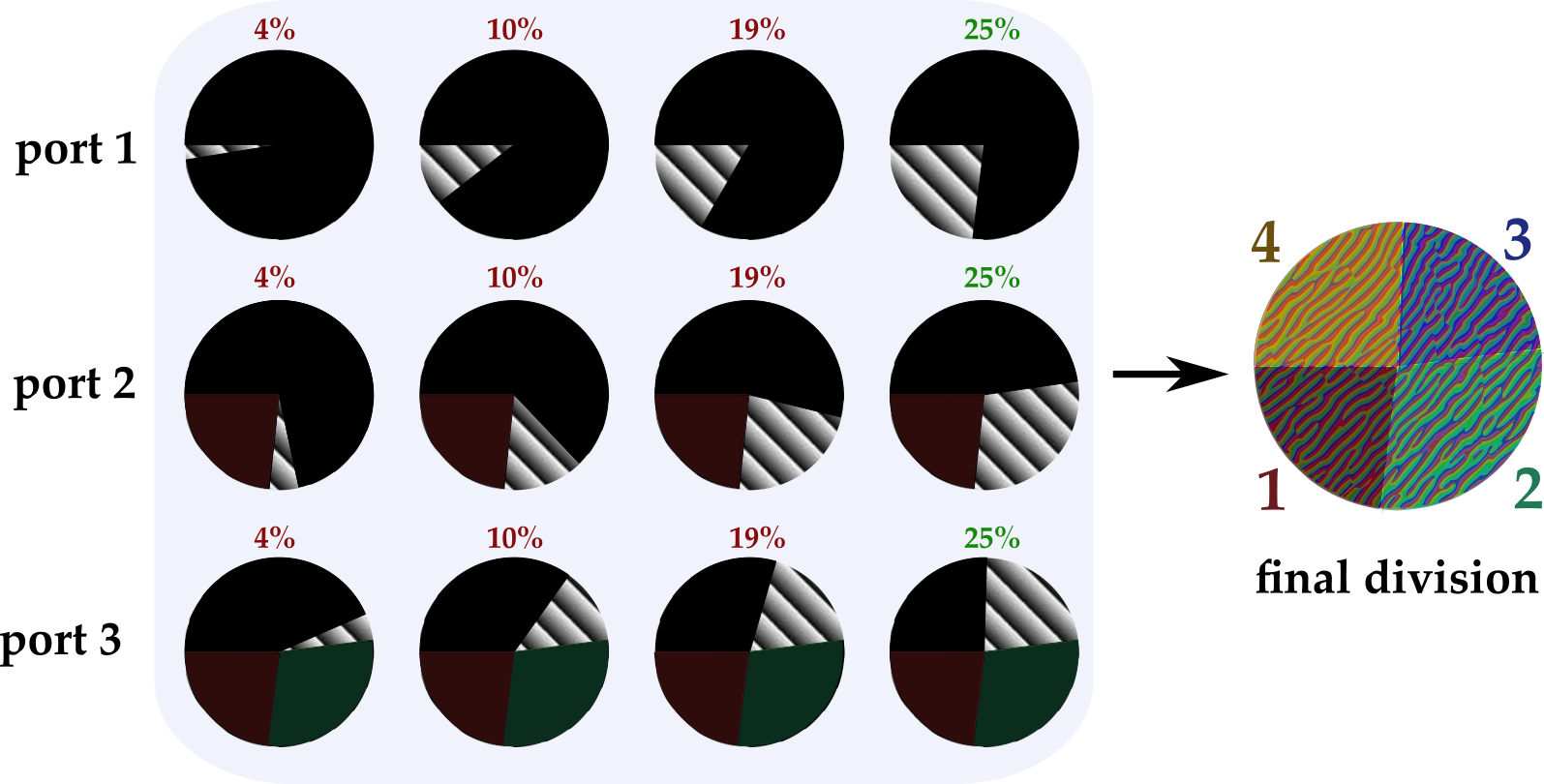}
\caption[Illustration of the division of the SLM into $m=4$ different ports.]{Illustration of the division of the \gls{SLM} into $m=4$ different ports. Due to the fact that the \gls{SLM} active profile is not perfectly Gaussian, the angles of the four ports are not precisely 90° each.}
\label{supp:fig:slices_scanning}
 \end{figure}

  \section{Vector in a Subspace best-known protocol}\label{supp:appendix:VS}
We analyze the best-known classical protocol $\piVS$, sketched in \cite{razExponentialSeparationQuantum1999}.  \\[0.2cm]
\textbf{Classical protocol $\bm{\piVS}$.}
Let $d\in \mathbb N$ be the communication budget, and set
$
k:=2^d$.
Using public randomness, Alice and Bob sample $k$ independent Gaussian vectors
$
z^1,\dots,z^k \sim \mathcal N\!\left(0,\frac1n I_n\right)$,  meaning that each $z^j$ is a centered Gaussian vector in $\mathbb R^n$ with covariance matrix $\frac1n I_n$.  Given the input vector $x\in S^{n-1}$, Alice computes
$
\hat j\in\arg\max_{1\le j\le k}\langle z^j,x\rangle
$
and sends the index $\hat j$ to Bob using $d$ bits. Since the vectors $z^1,\dots,z^k$ are publicly known, Bob can reconstruct $z^{\hat j}$ from the received index. He outputs $0$ if
$
\|\mathrm{Proj}_H z^{\hat j}\|^2 \ge \|\mathrm{Proj}_{H^\perp} z^{\hat j}\|^2,
$
and outputs $1$ otherwise.

The following theorem provides an explicit upper bound on the error probability of the classical protocol \(\piVS\). In particular, it implies that a communication cost of order \(\mathcal{O}(\sqrt n)\) is sufficient to achieve constant error probability.

\begin{theorem}[VS best-known classical protocol]
\label{supp:theo:VS_communication}
Let $n\ge 4$ be even and let $d\ge 1$ be an integer. An explicit one-way public-coin protocol $\piVS$ exists with communication cost $CC(\piVS)=d$ which solves the $n$-dimensional \gls{VS} problem with error probability at most
\begin{equation}
\label{supp:eq:epsilon_VS}
\epsVS(d,n)
=
\inf_{\begin{array}{c}
0<t<\sqrt{d/\pi},\,
u>0
\end{array}}
\left[
e^{-t^2/2}
+
2e^{-u^2/2}
+
e^{-x_{t,u}}
\right],
\end{equation}
where $x_{t,u}$ is defined as follows. If the set of nonnegative numbers $x$ such that
\begin{equation}
\label{supp:eq:def_xtu}
\frac{n-1+2\sqrt{(n-1)x}+2x}{n}
\le
\frac{\left(\sqrt{d/\pi}-t\right)^2}{\frac{n}{n-1}+\frac{2un}{\sqrt{n-1}}}
\end{equation}
is nonempty, then $x_{t,u}$ is its largest element. Otherwise, we set $x_{t,u}=0$.
\end{theorem}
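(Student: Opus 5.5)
The plan is to analyse the protocol $\piVS$ described just above the statement. By symmetry it suffices to treat the case $x\in H$ (the case $x\in H^\perp$ is identical with the roles of $H,H^\perp$ exchanged). I will then bound the error probability by a union bound over three ``bad'' events, which will produce the three terms $e^{-t^2/2}$, $2e^{-u^2/2}$ and $e^{-x_{t,u}}$.

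\textbf{Decomposition.} For each $j$, write $z^j=a_j x+w^j$ with $a_j=\langle z^j,x\rangle\sim\mathcal N(0,1/n)$ and $w^j\perp x$. By Gaussian rotation invariance, the family $(a_j)_j$ is independent of $(w^j)_j$, and each $w^j\sim\mathcal N(0,\frac1n P_{x^\perp})$. The index $\hat j$ is a function of the $a_j$ alone. Hence, conditionally on $\hat j$, the vector $w:=w^{\hat j}$ is still distributed as $\mathcal N(0,\frac1n P_{x^\perp})$ and is independent of $a:=a_{\hat j}=\max_j a_j$. Since $x\in H$ and $w\perp x$, we have
\[
\|\mathrm{Proj}_H z^{\hat j}\|^2=a^2+\|\mathrm{Proj}_{H\cap x^\perp}w\|^2,\qquad \|\mathrm{Proj}_{H^\perp}z^{\hat j}\|^2=\|\mathrm{Proj}_{H^\perp}w\|^2 .
\]
Therefore an error requires $a^2<\|\mathrm{Proj}_{H^\perp}w\|^2-\|\mathrm{Proj}_{H\cap x^\perp}w\|^2=:\|w\|^2\,U$. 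Here $U$ is a function of the direction $w/\|w\|$ only. That direction is uniform on the unit sphere of $x^\perp\cong\mathbb R^{n-1}$ and independent of $\|w\|$.

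\textbf{The three events.}

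(i) \emph{The maximum is large.} Set $g=\sqrt n\,a=\max_{j\le 2^d}g_j$ with $g_j$ i.i.d.\ standard Gaussians. I will first show $\mathbb E[g]\ge\sqrt{d/\pi}$, for instance by a crude lower bound on the expected maximum of $2^d$ standard Gaussians; the standard asymptotic is $\sqrt{2d\ln 2}$, so there is room. Since the maximum is $1$-Lipschitz, Gaussian concentration then gives $\Pr[g<\sqrt{d/\pi}-t]\le e^{-t^2/2}$ for $0<t<\sqrt{d/\pi}$.

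(ii) \emph{The projection difference is small.} The function $U$ on $S^{n-2}$ has mean $\frac{n/2-(n/2-1)}{n-1}=\frac1{n-1}$ and is $2$-Lipschitz. L\'evy/spherical concentration should therefore give $U\le \frac1{n-1}+\frac{2u}{\sqrt{n-1}}$ except with probability at most $2e^{-u^2/2}$.

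(iii) \emph{The norm is not too large.} Since $n\|w\|^2\sim\chi^2_{n-1}$, the Laurent--Massart upper tail gives $\|w\|^2\le\frac{n-1+2\sqrt{(n-1)x}+2x}{n}$ except with probability $e^{-x}$.

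\textbf{Combining.} Outside the three bad events, an error would require
\[
\frac{(\sqrt{d/\pi}-t)^2}{n}<\|w\|^2\Big(\frac1{n-1}+\frac{2u}{\sqrt{n-1}}\Big),
\]
which is equivalent to
\[
\|w\|^2>\frac{(\sqrt{d/\pi}-t)^2}{\frac n{n-1}+\frac{2un}{\sqrt{n-1}}}.
\]
This contradicts (iii) whenever $x$ satisfies \eqref{supp:eq:def_xtu}; taking the largest such $x$, namely $x_{t,u}$, gives the third term. If no such $x$ exists, the bound $e^{-0}=1$ is trivial. A union bound over (i)--(iii), followed by the infimum over $t$ and $u$, yields \eqref{supp:eq:epsilon_VS}. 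The communication cost is $d$ bits by construction.

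\textbf{Main obstacle.} The delicate points are step (ii) and the independence claims. One must justify that conditioning on $\hat j$ does not bias $w$, and pin down the correct Lipschitz constant and concentration inequality on $S^{n-2}$ that gives exactly the constant $2$ in $2e^{-u^2/2}$ and the deviation $2u/\sqrt{n-1}$. I would first try the Gaussian representation $U=\frac{\|G_1\|^2-\|G_2\|^2}{\|G\|^2}$ together with the L\'evy isoperimetric inequality for Lipschitz functions on the sphere, and adjust constants if needed.
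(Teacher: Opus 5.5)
Your proposal is correct and is essentially the paper's proof. It uses the same reduction to $x\in H$, the same split $z^{\hat j}=a x+w$ with $w$ unaffected by the selection of $\hat j$, a sharp lower bound on the expected maximum plus Borell--TIS for (i), L\'evy's lemma for (ii), the Laurent--Massart $\chi^2$ tail for $\|w\|^2$, and a union bound over the three events.

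Two small remarks:
\begin{itemize}
\item The constant-chasing you worry about in (ii) disappears once you note $U=2q-1$ with $q=\|\mathrm{Proj}_{H^\perp}\nu\|^2$, since $x^\perp=H^\perp\oplus(H\cap x^\perp)$. The paper then just applies the projection form of L\'evy's lemma to $q$, with $N=n-1$ and $l=n/2$, which yields exactly your deviation $2u/\sqrt{n-1}$ and probability $2e^{-u^2/2}$.
\item In (i) there is no room at $d=1$, where $\mathbb{E}\max(g_1,g_2)=1/\sqrt{\pi}$ exactly. A crude bound will therefore not do; you need the sharp estimate $\mathbb{E}\max_{i\le m}g_i\ge\sqrt{\ln m/(\pi\ln 2)}$ that the paper cites (Kamath), which is tight at $m=2$.
\end{itemize}
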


\begin{proof}

We analyze the protocol $\piVS$ defined above. By symmetry, it is enough to consider the case $x\in H$, since the case $x\in H^\perp$ is obtained by exchanging $H$ and $H^\perp$.

We start with two standard results on maxima of Gaussian random variables.

\begin{lemma}[Borell-TIS inequality \cite{adlerGaussianInequalities2007}]
\label{supp:lemma:Borel-Tis}
Let $Y=\max_{1\le i\le m}X_i$, where $X_i\sim\mathcal{N}(0,\sigma^2)$ are i.i.d.\ Gaussian random variables. Then for every $t>0$
\begin{equation}
\label{supp:eq:Borel-Tis}
P\left(Y-\mathbb{E}[Y]<-t\right)\le e^{-t^2/(2\sigma^2)}.
\end{equation}
\end{lemma}

\begin{lemma}[\cite{kamathBoundsExpectationMaximum}]
\label{supp:lemma:Emax}
Let $Y=\max_{1\le i\le m}X_i$, where $X_i\sim\mathcal{N}(0,\sigma^2)$ are i.i.d.\ Gaussian random variables. Then
\begin{equation}
\label{supp:eq:Emax}
\mathbb{E}[Y]\ge \frac{1}{\sqrt{\pi\ln 2}}\sigma\sqrt{\ln m}.
\end{equation}
\end{lemma}

We will also use the following concentration bound for a random projection on the sphere.

\begin{lemma}[Levy's lemma  \cite{aubrunAliceBobMeet2017}]
\label{supp:lemma:Levy}
Let $\nu\in S^{N-1}$ be a unit random vector distributed uniformly, with $N>2$. Let $F\subseteq \mathbb{R}^N$ be a subspace of dimension $l$. Then for each $u>0$
\begin{equation}
\label{supp:eq:Levy}
P\left(
\left|
|\mathrm{Proj}_F \nu|^2-\frac{l}{N}
\right|
\ge \frac{u}{\sqrt N}
\right)
\le 2e^{-u^2/2},
\end{equation}
where $|\cdot|$ is the Euclidean norm and $\mathrm{Proj}_F \nu$ is defined as the projection of $\nu$ onto the subspace $F$, formally given by $\mathrm{Proj}_F \nu \coloneqq \arg\min_{v \in F} |\nu - v|$.
\end{lemma}

Finally, we will use the following standard upper-tail inequality for chi-square random variables.

\begin{lemma}[Chi-square upper-tail inequality {\cite[Lemma~1, eq.~(4.3)]{laurentAdaptiveEstimationQuadratic2000}}]
\label{supp:lemma:chi_square_tail}
Let $U\sim\chi^2(D)$. Then for every $x>0$,
\begin{equation}
\label{supp:eq:chi_square_tail}
P\left(U-D\ge 2\sqrt{Dx}+2x\right)\le e^{-x}.
\end{equation}
\end{lemma}

For each $j\in{1,\dots,k}$, with $k=2^d$, we decompose the Gaussian vector $z^j$ into its component along the direction $x$ and its orthogonal component. More precisely, we define
$$
X_j:=\langle z^j,x\rangle
\qquad\text{and}\qquad
W_j:=z^j-\langle z^j,x\rangle x.
$$
Since $x$ is a unit vector, $X_j$ is the scalar coefficient of the projection of $z^j$ onto the one-dimensional space spanned by $x$, while $W_j$ belongs to the orthogonal subspace
$
x^\perp:={q\in\mathbb R^n:\langle q,x\rangle=0}.
$
Therefore,
$$
z^j=xX_j+W_j.
$$

Now recall that
$
z^j\sim\mathcal N\left(0,\frac1n I_n\right)
$. Equivalently, the coordinates of $z^j$ in any orthonormal basis are independent Gaussian random variables with variance $1/n$.
It follows that the scalar projection $X_j=\langle z^j,x\rangle$ is a one-dimensional centered Gaussian random variable with variance
$$
\mathrm{Var}(X_j)=\frac1n|x|^2=\frac1n,
$$
and hence
$
X_j\sim\mathcal N\left(0,\frac1n\right)$.
Similarly, $W_j$ is a centered Gaussian vector supported on the subspace $x^\perp$. Its covariance is the restriction of $\frac1n I_n$ to $x^\perp$, so in an orthonormal basis of $x^\perp$ its coordinates are independent Gaussian random variables with variance $1/n$. In this sense,
$
W_j\sim\mathcal N\left(0,\frac1n I_{x^\perp}\right)$.
Here, $I_{x^\perp}$ denotes the identity operator on the $(n-1)$-dimensional Euclidean space $x^\perp$.

Finally, since $z^j$ is an isotropic Gaussian vector, its components along orthogonal subspaces are independent. Therefore, the scalar variable $X_j$ and the orthogonal component $W_j$ are independent.

Let
$
y:=\max{X_1,\dots,X_k}=\langle z^{\hat{j}},x\rangle.
$
Applying Lemma~\ref{supp:lemma:Emax} with $\sigma=n^{-1/2}$ and $m=2^d$, we obtain
$$
\mathbb{E}[y]\ge \sqrt{\frac{d}{n\pi}}.
$$
Hence, by Lemma~\ref{supp:lemma:Borel-Tis}, for every $t>0$,
\begin{equation}
\label{supp:eq:boundy}
P\left(
y\ge \frac{1}{\sqrt n}\left(\sqrt{\frac d\pi}-t\right)
\right)
\ge 1-e^{-t^2/2}.
\end{equation}
From now on, we restrict to $0<t<\sqrt{d/\pi}$, so that the lower bound in Eq.~\eqref{supp:eq:boundy} is nonnegative. Hence, whenever the event in Eq.~\eqref{supp:eq:boundy} holds, we also have
\begin{equation}
\label{supp:eq:boundy_square}
y^2 \ge \frac{1}{n}\left(\sqrt{\frac d\pi}-t\right)^2.
\end{equation}

Next, the choice of $\hat{j}$ only depends on the scalar variables $X_1,\dots,X_k$. Since the family $(W_1,\dots,W_k)$ is independent of $(X_1,\dots,X_k)$, the selected orthogonal component $W_{\hat{j}}$ still has distribution $\mathcal{N}(0,n^{-1}I_{x^\perp})$. Therefore we may write
$$
z^{\hat{j}}=xy+r\nu,
$$
where
$
r:=|W_{\hat{j}}|,
$
and $\nu$ is uniformly distributed on the unit sphere of $x^\perp$.

Let
$
q:=|\mathrm{Proj}_{H^\perp}\nu|^2.
$
Since $x\in H$, we have $H^\perp\subseteq x^\perp$. Moreover, since $n$ is even, $\dim(H^\perp)=n/2$. Hence, inside the $(n-1)$-dimensional space $x^\perp$, the mean value of $q$ is $\frac{n/2}{n-1}$. Applying Lemma~\ref{supp:lemma:Levy} with $N=n-1$ and $l=n/2$, we obtain that for every $u>0$
\begin{equation}
\label{supp:eq:boundq}
P\left(
q\le \frac{n}{2(n-1)}+\frac{u}{\sqrt{n-1}}
\right)
\ge 1-2e^{-u^2/2}.
\end{equation}

Now,
$$
|\mathrm{Proj}_{H^\perp} z^{\hat{j}}|^2=r^2q,
\qquad
|\mathrm{Proj}_{H} z^{\hat{j}}|^2=y^2+r^2(1-q).
$$
Therefore Bob answers correctly whenever
$$
y^2+r^2(1-q)>r^2q,
$$
that is,
\begin{equation}
\label{supp:eq:correct_condition}
y^2>r^2(2q-1).
\end{equation}
Assume now that the events in Eqs.~\eqref{supp:eq:boundy_square} and \eqref{supp:eq:boundq} both hold. Then
$$
y^2 \ge \frac{1}{n}\left(\sqrt{\frac d\pi}-t\right)^2,
\qquad
2q-1 \le \frac{1}{n-1}+\frac{2u}{\sqrt{n-1}}.
$$
Since $r^2\ge 0$, it follows that
$$
r^2(2q-1)\le r^2\left(\frac{1}{n-1}+\frac{2u}{\sqrt{n-1}}\right).
$$
Therefore, a sufficient condition for Eq.~\eqref{supp:eq:correct_condition} is
$$
\frac{1}{n}\left(\sqrt{\frac d\pi}-t\right)^2>r^2\left(\frac{1}{n-1}+\frac{2u}{\sqrt{n-1}}\right).
$$
Equivalently,
\begin{align}
r^2
&<
\frac{
\frac{1}{n}\left(\sqrt{\frac d\pi}-t\right)^2
}{
\frac{1}{n-1}+\frac{2u}{\sqrt{n-1}}
}
\\
&=
\frac{
\left(\sqrt{d/\pi}-t\right)^2
}{
\frac{n}{n-1}+\frac{2un}{\sqrt{n-1}}
}
=: T_{t,u}.
\end{align}
\noindent
Hence, Bob certainly answers correctly whenever $r^2<T_{t,u}$.
Finally, since $r^2\sim \frac1n\chi^2(n-1)$, the random variable
$
U:=nr^2
$
has distribution $\chi^2(n-1)$. Hence, by Lemma~\ref{supp:lemma:chi_square_tail}, for every $x>0$,
$$
P\left(
nr^2-(n-1)\ge 2\sqrt{(n-1)x}+2x
\right)\le e^{-x},
$$
or equivalently,
$$
P\left(
r^2\ge \frac{n-1+2\sqrt{(n-1)x}+2x}{n}
\right)\le e^{-x}.
$$

If the set defining $x_{t,u}$ is empty, then $x_{t,u}=0$ and the trivial bound
$$
P(r^2\ge T_{t,u})\le 1=e^{-x_{t,u}}
$$
holds. Otherwise, by definition of $x_{t,u}$,
$$
\frac{n-1+2\sqrt{(n-1)x_{t,u}}+2x_{t,u}}{n}\le T_{t,u},
$$
and therefore
$$
P(r^2\ge T_{t,u})
\le
P\left(
r^2\ge \frac{n-1+2\sqrt{(n-1)x_{t,u}}+2x_{t,u}}{n}
\right)
\le e^{-x_{t,u}}.
$$
Combining this with Eqs.~\eqref{supp:eq:boundy} and \eqref{supp:eq:boundq} and using the union bound, we obtain
$$
\epsVS(d,n)
\le
e^{-t^2/2}
+
2e^{-u^2/2}
+
e^{-x_{t,u}}.
$$
Since this bound holds for every $0<t<\sqrt{d/\pi}$ and every $u>0$, taking the infimum over $t$ and $u$ concludes the proof.

\end{proof}

The previous theorem gives an optimized upper bound on the error probability, but its numerical evaluation still requires an optimization over the auxiliary parameters $t$ and $u$. The following corollary gives a simpler closed-form sufficient condition on the communication cost, obtained by fixing explicit values of these parameters.

\begin{corollary}[Closed-form sufficient communication cost]
\label{supp:cor:VS_closed_form}
Let $n\ge 4$ be even and let $0<\epsilon<1$. Define
$$
t_\epsilon:=\sqrt{2\ln\left(\frac{3}{\epsilon}\right)},
\qquad
u_\epsilon:=\sqrt{2\ln\left(\frac{6}{\epsilon}\right)},
\qquad
x_\epsilon:=\ln\left(\frac{3}{\epsilon}\right).
$$
Moreover, define
$$
A_{\epsilon,n}
:=
\frac{n}{n-1}
+
\frac{2u_\epsilon n}{\sqrt{n-1}},
\quad \textit{and} \quad
B_{\epsilon,n}
:=
\frac{
n-1+2\sqrt{(n-1)x_\epsilon}+2x_\epsilon
}{n}.
$$
Then the classical protocol $\piVS$ solves the $n$-dimensional \gls{VS} problem with error probability at most $\epsilon$ whenever
\begin{equation}
\label{supp:eq:closed_form_d}
d
\ge
\pi
\left(
t_\epsilon
+
\sqrt{A_{\epsilon,n}B_{\epsilon,n}}
\right)^2.
\end{equation}
\end{corollary}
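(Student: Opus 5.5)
The plan is to apply Theorem~\ref{supp:theo:VS_communication} with the explicit choice $t=t_\epsilon$, $u=u_\epsilon$, and to check that $x_\epsilon$ belongs to the set defining $x_{t_\epsilon,u_\epsilon}$. Each of the three terms in Eq.~\eqref{supp:eq:epsilon_VS} is then at most $\epsilon/3$.

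\textbf{Step 1: the first two terms.} With the stated choices,
\[
e^{-t_\epsilon^2/2}=\frac{\epsilon}{3},
\qquad
2e^{-u_\epsilon^2/2}=2\cdot\frac{\epsilon}{6}=\frac{\epsilon}{3}.
\]
We also need $t_\epsilon>0$ and $u_\epsilon>0$, which hold since $\epsilon<1<3$.

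\textbf{Step 2: admissibility of $t_\epsilon$.} The theorem requires $0<t_\epsilon<\sqrt{d/\pi}$. Condition~\eqref{supp:eq:closed_form_d} gives
\[
\sqrt{d/\pi}\ \ge\ t_\epsilon+\sqrt{A_{\epsilon,n}B_{\epsilon,n}}.
\]
Since $A_{\epsilon,n}>0$ and $B_{\epsilon,n}\ge (n-1)/n>0$, the square root is strictly positive, so the inequality $t_\epsilon<\sqrt{d/\pi}$ is strict.

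\textbf{Step 3: the third term.} By the same inequality,
\[
\left(\sqrt{d/\pi}-t_\epsilon\right)^2\ge A_{\epsilon,n}B_{\epsilon,n}.
\]
Dividing by $A_{\epsilon,n}>0$, whose value is exactly the denominator in Eq.~\eqref{supp:eq:def_xtu} at $u=u_\epsilon$, gives
\[
B_{\epsilon,n}\le \frac{\left(\sqrt{d/\pi}-t_\epsilon\right)^2}{A_{\epsilon,n}}.
\]
Here $B_{\epsilon,n}$ is the left-hand side of Eq.~\eqref{supp:eq:def_xtu} evaluated at $x=x_\epsilon\ge 0$ (positive since $3/\epsilon>1$). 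So $x_\epsilon$ lies in the defining set, the set is nonempty, and its largest element satisfies $x_{t_\epsilon,u_\epsilon}\ge x_\epsilon$.

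The left-hand side of Eq.~\eqref{supp:eq:def_xtu} is continuous and increasing to infinity in $x$, so the set is a compact interval $[0,x^\ast]$ and its maximum exists. This is the only point where some care is needed. In any case, monotonicity of $e^{-x}$ suffices: $e^{-x_{t_\epsilon,u_\epsilon}}\le e^{-x_\epsilon}=\epsilon/3$.

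\textbf{Step 4: conclusion.} Since $\epsVS(d,n)$ is an infimum over admissible $(t,u)$, it is bounded by the value at $(t_\epsilon,u_\epsilon)$. Summing the three bounds gives $\epsVS(d,n)\le\epsilon$, and Theorem~\ref{supp:theo:VS_communication} then shows that $\piVS$ has error probability at most $\epsilon$.

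There is no real obstacle here. The only points needing attention are the strictness in Step~2 and the existence of the maximum in the definition of $x_{t,u}$.
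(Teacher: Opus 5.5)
Your proposal is correct and matches the paper's proof: fix $t=t_\epsilon$, $u=u_\epsilon$ in Theorem~\ref{supp:theo:VS_communication}, note that the first two terms equal $\epsilon/3$ each, and use the condition on $d$ to show $x_\epsilon$ is admissible, so that $e^{-x_{t_\epsilon,u_\epsilon}}\le e^{-x_\epsilon}=\epsilon/3$. You also check that $t_\epsilon<\sqrt{d/\pi}$ holds strictly, so $(t_\epsilon,u_\epsilon)$ lies in the range of the infimum, and that the maximum defining $x_{t,u}$ exists; the paper leaves both points implicit, and you handle them correctly.
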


\begin{proof}
We apply Theorem~\ref{supp:theo:VS_communication} with the explicit choice
$$
t=t_\epsilon,
\qquad
u=u_\epsilon.
$$
By definition of $t_\epsilon$ and $u_\epsilon$, we have
$$
e^{-t_\epsilon^2/2}=\frac{\epsilon}{3},
\qquad
2e^{-u_\epsilon^2/2}=\frac{\epsilon}{3}.
$$
It remains to control the third term appearing in the bound of Theorem~\ref{supp:theo:VS_communication}. We will show that, under the condition in Eq.~\eqref{supp:eq:closed_form_d}, the value $x_\epsilon$ is admissible in Eq.~\eqref{supp:eq:def_xtu} for the choice $t=t_\epsilon$ and $u=u_\epsilon$.

Indeed, for $t=t_\epsilon$ and $u=u_\epsilon$, the right-hand side of Eq.~\eqref{supp:eq:def_xtu} is
$$
\frac{
\left(\sqrt{d/\pi}-t_\epsilon\right)^2
}{
\frac{n}{n-1}+\frac{2u_\epsilon n}{\sqrt{n-1}}
}
=
\frac{
\left(\sqrt{d/\pi}-t_\epsilon\right)^2
}{
A_{\epsilon,n}
}.
$$
On the other hand, the left-hand side of Eq.~\eqref{supp:eq:def_xtu} evaluated at $x=x_\epsilon$ is exactly
$$
\frac{
n-1+2\sqrt{(n-1)x_\epsilon}+2x_\epsilon
}{n}
=B_{\epsilon,n}.
$$
Therefore, $x_\epsilon$ is admissible whenever
$$
B_{\epsilon,n}
\le
\frac{
\left(\sqrt{d/\pi}-t_\epsilon\right)^2
}{
A_{\epsilon,n}
}.
$$
This condition is implied by
$$
\sqrt{d/\pi}
\ge
t_\epsilon+\sqrt{A_{\epsilon,n}B_{\epsilon,n}},
$$
which is equivalent to Eq.~\eqref{supp:eq:closed_form_d}.

Hence, under Eq.~\eqref{supp:eq:closed_form_d}, the number $x_\epsilon$ belongs to the admissible set in Eq.~\eqref{supp:eq:def_xtu}. Since $x_{t_\epsilon,u_\epsilon}$ is defined in Theorem~\ref{supp:theo:VS_communication} as the largest element of this admissible set, we have
$$
x_{t_\epsilon,u_\epsilon}\ge x_\epsilon.
$$
Consequently,
$$
e^{-x_{t_\epsilon,u_\epsilon}}
\le
e^{-x_\epsilon}
=
\frac{\epsilon}{3}.
$$
Using Theorem~\ref{supp:theo:VS_communication} with $t=t_\epsilon$ and $u=u_\epsilon$, we therefore obtain
$$
\epsVS(d,n)
\le
e^{-t_\epsilon^2/2}
+
2e^{-u_\epsilon^2/2}
+
e^{-x_{t_\epsilon,u_\epsilon}}
\le
\frac{\epsilon}{3}
+
\frac{\epsilon}{3}
+
\frac{\epsilon}{3}
=
\epsilon.
$$
This concludes the proof.
\end{proof}

\begingroup
\renewcommand{\label}[1]{}
\putbib[Bibliography_QCC3]
\endgroup
\end{bibunit}

\end{document}